\documentclass[reqno,12pt]{amsart}

\usepackage{xcolor}
\definecolor{teal}{RGB}{0,128,128}

\usepackage{amssymb}
\usepackage[english]{babel}
\usepackage{xspace}
\usepackage[bookmarksnumbered,colorlinks]{hyperref}
\usepackage{graphics}
\usepackage{geometry}
\usepackage{enumitem}
\usepackage{mathrsfs}
\usepackage{bbm}
\def\bb#1\eb{\textcolor{blue}{#1}}
\def\br#1\er{\textcolor{red}{#1}} %
\def\bv#1\ev{\textcolor{green}{#1}} %
\def\bm#1\em{\textcolor{magenta}{#1}} %
\def\bc#1\ec{\textcolor{cyan}{#1}} %

\renewcommand{\hat}{\widehat}

\newtheorem{theorem}{Theorem}[section]
\newtheorem{lemma}[theorem]{Lemma}
\newtheorem{definition}[theorem]{Definition}
\newtheorem{corollary}[theorem]{Corollary}
\newtheorem{proposition}[theorem]{Proposition}
\theoremstyle{remark}
\newtheorem{remark}[theorem]{Remark}
\theoremstyle{definition}

\newcommand{\bt}{\begin{theorem}}
	\newcommand{\et}{\end{theorem}}
\newcommand{\bco}{\begin{corollary}}
	\newcommand{\eco}{\end{corollary}}
\newcommand{\bd}{\begin{definition}}
	\newcommand{\ed}{\end{definition}}
\newcommand{\bl}{\begin{lemma}}
	\newcommand{\el}{\end{lemma}}
\newcommand{\bpr}{\begin{proposition}}
	\newcommand{\epr}{\end{proposition}}
\newcommand{\bere}{\begin{remark}}
	\newcommand{\ere}{\end{remark}}
\newcommand{\beq}{\begin{equation}}
	\newcommand{\eeq}{\end{equation}}
\def\bal#1\eal{\begin{align}#1\end{align}}
\def\baln#1\ealn{\begin{align*}#1\end{align*}}
\def\bml#1\eml{\begin{multline}#1\end{multline}}
\def\bmln#1\emln{\begin{multline*}#1\end{multline*}}
\def\bga#1\ega{\begin{gather}#1\end{gather}}
\def\bgan#1\egan{\begin{gather*}#1\end{gather*}}

\newcommand{\R}{\ensuremath{\mathbb{R}}\xspace}

\newcommand{\scrS}{\mathscr{S}}
\newcommand{\scrT}{\mathscr{T}}
\renewcommand{\H}{\ensuremath{\mathcal{H}}\xspace}
\newcommand{\A}{\ensuremath{\mathcal{A}}\xspace}

\newcommand{\Lie}{\ensuremath{\mathcal{L}}\xspace}

\title[Photon and massive particle hypersurfaces]{A unified framework for photon and massive particle hypersurfaces in stationary spacetimes}

\author[E. Caponio]{Erasmo Caponio}
\address{Dipartimento di Meccanica, Matematica e Management\hfill\break\indent
	Politecnico di Bari \hfill\break\indent Via Orabona 4,
	70125, Bari, Italy}
\email{erasmo.caponio@poliba.it}

\author[A.V. Germinario]{Anna Valeria Germinario}
\address{Dipartimento  di Matematica\hfill\break\indent
	Universit\`a degli Studi di Bari Aldo Moro\hfill\break\indent Via  Orabona 4,
	70125 Bari Italy}
\email{anna.germinario@uniba.it}
\author[A. Masiello]{Antonio Masiello}
\address{Dipartimento di Meccanica, Matematica e Management\hfill\break\indent
	Politecnico di Bari \hfill\break\indent Via Orabona 4,
	70125, Bari, Italy}
\email{antonio.masiello@poliba.it}

\date{}

\begin{document}
	
		\begin{abstract}
			We revisit the notion of massive particle hypersurfaces introduced in \cite{KBG2022} and place it within a unified framework alongside photon hypersurfaces in stationary spacetimes. 
			More precisely, for Killing-invariant timelike hypersurfaces $\scrT=\R\times S_0$, where $S_0$ is a smooth embedded surface in a spacelike slice $S$ of the stationary spacetime, we show that $\scrT$ is a photon hypersurface or a massive particle hypersurface if and only if $S_0$ is totally geodesic with respect to certain associated Finsler structures on the slice: a Randers metric governing null geodesics and a Jacobi--Randers metric governing timelike solutions of the Lorentz force equation at fixed energy and charge-to-mass ratio.
			We also prove existence and multiplicity results for proper-time parametrized solutions of the Lorentz force equation with fixed energy and charge-to-mass ratio, either connecting a point to a flow line of the Killing vector field or having periodic, non-constant projection on $S$.
		\end{abstract}
		
			\keywords{Photon surfaces, massive particle surfaces,  stationary spacetimes}
	\subjclass[2020]{53C50, 53C60,  70H33, 83C50}
	\maketitle

\section{Introduction}

The study of characteristic surfaces in spacetime has gained importance with the recent observations of black hole shadows by the Event Horizon Telescope \cite{EHT2019}. These observations are fundamentally connected to the behavior of photon surfaces --timelike hypersurfaces where light rays can be trapped in bound orbits. The mathematical theory of photon surfaces by Claudel, Virbhadra, and Ellis \cite{Claudel2001} establishes a connection between the dynamics of null geodesics and the differential geometry of hypersurfaces through total umbilicity. However, in realistic astrophysical environments around compact objects, massive particles (electrons, neutrinos, and other matter) also play crucial roles. This motivates the development of massive particle surfaces --generalizations of photon surfaces that can trap the worldlines of massive particles with specific energy and charge. While photon surfaces are totally umbilical, it has been shown in \cite{KBG2022} that massive particle surfaces are characterized by a type of partial umbilicity condition.

In this work, our aim is to characterize both photon and massive particle surfaces in the class of stationary spacetimes using Finsler geometry. Our first ingredients are the standard stationary splitting \cite{Masiello1994, JavSan2008} and the Randers description of null geodesics \cite{cjm,cjs}: in a stationary spacetime, locally (and, under suitable causality and completeness assumptions, also globally), future-pointing null geodesics project to pregeodesics  (i.e. geodesics up to reparametrization)   of a Randers metric defined on a spacelike slice of the manifold. Our second ingredient is a {\em Jacobi--Randers reduction} for massive particles.  More precisely,   
we show that the timelike solutions of the Lorentz force equation (parametrized with respect to proper time, at fixed Killing energy and charge-to-mass ratio) have  projections which are pregeodesics  of a Jacobi--Randers metric on the spacelike  slice (see also \cite{ch2019} for timelike geodesics and \cite{LiJia2021} for the  case of charged particles). 
In order to prove this,  we  rewrite the  equation satisfied by these timelike solutions on the slice as the equations of motion of a classical non-relativistic electromagnetic system (Theorem~\ref{lorentz-lag}). In  particular, when the total energy of a solution is  $e=-\frac 1 2$, we can recover a timelike trajectory  parametrized with respect to proper time.
It is then possible to prove that the solutions  of this  electromagnetic system, at least for large values of the total energy $e$, are pregeodesics  of an associated Finsler metric depending on $e$ (Theorem~\ref{jacprinc}). 
This second, Finslerian reduction goes back at least to Weinstein's work on Hamiltonian systems \cite{Wein78} and, for  Tonelli Lagragians  (i.e. convex and superlinear ones), to \cite{CIPP}; in our specific electromagnetic case, however, the proof is elementary and direct, inspired by \cite{ben84}. 
We also note that our approach differs from \cite{mest2016}, where Routh reduction is used after homogenizing the non-relativistic electromagnetic Lagrangian as in \cite[Ch.~II, \S7]{Rund}.

 We obtain then  a clean equivalence between confinement on invariant hypersurfaces $\scrT=\R\times S_0$ and total geodesicity of $S_0$  with respect to the relevant Randers metric defined on a neighborhood of $S_0$.  Specifically, we prove that $\scrT$ is a photon surface (resp.\ a massive particle surface) if and only if  $S_0$ is a totally geodesic submanifold with respect to the associated Randers  (resp. Jacobi--Randers) metric. This provides a unified Finslerian characterization of these spacetime hypersurfaces (see also Remark~\ref{unified}).

The paper is organized as follows: in Section~\ref{stationaryFermat}, we recall the standard stationary splitting and the Fermat principle for null geodesics. Section~\ref{photonsurf} is devoted to photon surfaces, establishing the equivalence between the photon surface condition and the total geodesicity of the spacelike slice with respect to the Fermat metrics. In Section~\ref{secMPS}, we define massive particle surfaces, which we denote as $(\rho, \varepsilon)$-MPS to emphasize their dependence on the fixed  charge-to-mass $\rho$ and energy $\varepsilon$ (however, the case when there is no electromagnetic field is   included,  see Remark~\ref{timelikegeo}.) We distinguish our analysis from \cite{KBG2022} by treating the charge-to-mass ratio as a single parameter and, crucially, by parametrizing trajectories with respect to proper time \cite{CapMin2004,Min2003,MinSan2006}. This choice is also connected to the coordinate-free characterization of $(\rho, \varepsilon)$-MPS presented in Theorem~\ref{thmumbMPS} (including the so-called \emph{master equation}, Proposition~\ref{master}) and to Section~\ref{KKJM}, where we introduce the Jacobi-Randers metrics for charged massive particles and prove our main result (Theorem \ref{mpschar}). We also discuss an application  to the existence of trajectories connecting a point to a line  and for the ones having periodic projection on the spacelike slice (see  Theorem~\ref{pointline}). Moreover, after a result about when the Randers domain coincides with the whole spacelike slice $S$ (see Proposition~\ref{global-high-energy}), we show the existence of at least one  solution with non-trivial periodic projection on a compact  $S$ in each sufficiently large enough  level of energy $\varepsilon$ (see Corollary~\ref{closedlarge}).
Finally, an  Appendix  rigorously establishes  the correspondence between the solutions with fixed energy  $e$ of a non-relativistic electromagnetic systems and the geodesic equation  of  a Jacobi--Randers metric depending on $e$.

\section{Stationary spacetimes: standard form and Fermat principles}\label{stationaryFermat}

Throughout we assume that $(M,g)$ is a (connected) spacetime of dimension $n\geq 3$ admitting a complete, future-pointing, timelike, conformal Killing field $K$. 
By \cite{JavSan2008}, $(M,g)$ splits globally as a \emph{standard conformastationary} spacetime with respect to $K$ 
if and only if it is distinguishing. More precisely,  under the (optimal) hypotheses
\[
K \ \text{complete}, \qquad (M,g) \ \text{distinguishing},
\]
there is a diffeomorphism $M\cong \R\times S$ such that, in the induced coordinates $(t,x)\in\R\times S$ with $K=\partial_t$,
\[
	g_{(t,x)}=\Omega(t,x)\Big(-\beta(x)\,dt^2+2\,\omega_x(\cdot)\,dt+g_0|_x\Big),
\]
where $\Omega>0$, $\beta>0$, $\omega$ is a one-form and $g_0$ is a Riemannian metric on $S$. 
Moreover, replacing $g$ by the conformal metric $g^*=g/\Omega$ one may assume $K$ is Killing; i.e. we work
with the \emph{standard stationary} representative
\begin{equation*}
	g=-\beta(x)\,dt^2+2\,\omega_x(\cdot)\,dt+g_0|_x .
\end{equation*}
This normalization is harmless in the study of photon surfaces as null geodesics are conformally invariant 
(we will systematically use this in what follows and we will limit our analysis for massive particle surfaces  to stationary spacetimes with the above standard form).
\medskip

Introduce the usual optical data
\begin{equation}\label{opticaldata}
	\widehat\omega:=\frac{\omega}{\beta},\qquad 
	h_0:=g_0+\frac{1}{\beta}\,\omega\otimes\omega, \qquad 
	\widehat h:=\frac{1}{\beta}h_0= \frac{1}{\beta}\,g_0+\widehat\omega\otimes\widehat\omega,
\end{equation}
so that the metric $g$ can be written as
\begin{equation}\label{1a}
	g_{(t,x)}  =  - \beta(dt-\widehat\omega_x)^2 +h_0|_x .
\end{equation}
Any future-/past-pointing lightlike curve $\gamma=(t,x)$ satisfies the null constraint
\begin{equation*}
	0=g(\dot\gamma,\dot\gamma)=-\beta\big(\dot t-\widehat\omega(\dot x)\big)^2+h_0(\dot x,\dot x).
\end{equation*}
The {\em Fermat  metrics} on $S$ are
\begin{equation}\label{Fermat}
	F^\pm(x,y)=\sqrt{\,\widehat h_x(y,y)\,}\ \pm\ \widehat\omega_x(y).
\end{equation}
Their geodesics are related to null geodesics of $(M,g)$ by the following {\em Fermat principle}:

\begin{theorem}[Fermat principle {\cite[Thm.\ 4.1]{cjm}}]\label{thm:Fermat}
Let $p=(t_0,x_0)$ and $q=(t_1,x_1)$ with $t_1>t_0$. A future-pointing lightlike curve, $\gamma\colon [0,1]\to M$,  $\gamma(s)=\big (t(s),x(s)\big)$, from $p$ to the timelike line  $\R\times\{x_1\}$ is a geodesic of $(M,g)$ if and only if $x$ is a pregeodesic of $F^+$ joining $x_0$ to $x_1$, parametrized with constant speed $E:=\sqrt{\hat h(\dot x,\dot x)}$    and  $t$ is given by 
\begin{equation}\label{tcomponent}
  t(s)=t(0)+\int_0^s F^+\big(x(u),\dot x(u)\big )du.
\end{equation}
Analogously, if $t_1<t_0$, then for past--pointing lightlike curves one has to replace $F^+$ with $F^-$ and \eqref{tcomponent} by
\[t(s)=t(0)-\int_0^s F^-\big(x(u),\dot x(u)\big )\,du.\]
\end{theorem}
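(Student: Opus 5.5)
Since $\partial_t$ is a Killing field and the metric does not depend on $t$, I plan to use a Lagrangian reduction of the geodesic equations, with the null constraint solved for $\dot t$. For a future-pointing lightlike curve, $g(\dot\gamma,\dot\gamma)=0$ written via \eqref{1a} gives $\beta(\dot t-\widehat\omega(\dot x))^2=h_0(\dot x,\dot x)$. Future orientation means $g(\dot\gamma,\partial_t)<0$, i.e.\ $\beta(\dot t-\widehat\omega(\dot x))>0$. Hence
\[
\dot t=\widehat\omega(\dot x)+\sqrt{h_0(\dot x,\dot x)/\beta}=F^+(x,\dot x),
\]
which is \eqref{tcomponent} after integration. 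For the past-pointing case the sign of the square root flips, giving $\dot t=-F^-(x,\dot x)$. So every lightlike curve is determined by its projection $x$ up to the initial time, and the only content left is the equivalence between ``$\gamma$ is a geodesic'' and ``$x$ is an $F^+$-pregeodesic with constant $\hat h$-speed.''

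For the direction ``geodesic $\Rightarrow$ pregeodesic,'' I will use the conserved quantity $E_K:=g(\dot\gamma,\partial_t)=-\beta(\dot t-\widehat\omega(\dot x))$. By the null constraint this equals $-\sqrt{\beta\, h_0(\dot x,\dot x)}=-\beta\sqrt{\hat h(\dot x,\dot x)}$. Since it is constant along the geodesic, this already gives the constant-speed statement with respect to $\hat h$, up to the factor $\beta$; I will need to check the normalization carefully. Possibly the statement intends the Fermat parametrization via the conformal metric $g/\beta=-(dt-\widehat\omega)^2+\hat h$. Null geodesics are conformally invariant only up to reparametrization, so I will pass to $g/\beta$. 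There $E_K=-(\dot t-\widehat\omega(\dot x))=-\sqrt{\hat h(\dot x,\dot x)}$ is constant, which gives exactly the constant $\hat h$-speed $E$.

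For the main step, I will show that the projections are $F^+$-geodesics by a variational argument, the Fermat principle in arrival-time form. Null geodesics from $p$ to the line $\R\times\{x_1\}$ are exactly the critical points of the arrival time $t(1)$ among lightlike curves from $p$ to that line, a classical result (Kovner/Perlick) that I would prove directly. Because $t(1)-t_0=\int_0^1F^+(x,\dot x)\,ds$, this time is precisely the $F^+$-length of $x$. Critical points of length are pregeodesics, and conversely. Alternatively, and more concretely, I can compute directly. The $x$-part of the Euler--Lagrange equations of $L=\tfrac12\bigl(-(\dot t-\widehat\omega(\dot x))^2+\hat h(\dot x,\dot x)\bigr)$, with the constant $\dot t-\widehat\omega(\dot x)=E$ substituted, becomes
\[
\tfrac{d}{ds}\bigl(\hat h(\dot x,\cdot)+E\widehat\omega\bigr)-\partial_x\bigl(\tfrac12\hat h(\dot x,\dot x)+E\widehat\omega(\dot x)\bigr)=0.
\]
Since $\sqrt{\hat h(\dot x,\dot x)}=E$, this coincides with $E$ times the Euler--Lagrange equation of the length functional $\sqrt{\hat h}+\widehat\omega$ at constant $\hat h$-speed. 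For the converse, the same computation runs backwards. The $t$-equation of $L$ holds automatically because $\dot t-\widehat\omega(\dot x)$ is constant, and the $x$-equation is equivalent to the $F^+$-geodesic equation.

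The main obstacle I expect is this identification of the Euler--Lagrange equations. I must verify that the terms coming from the derivative of $\sqrt{\hat h}$ match $\hat h(\dot x,\cdot)/E$ only because the $\hat h$-speed is constant. That is why the parametrization ``constant speed $E=\sqrt{\hat h(\dot x,\dot x)}$'' appears in the statement rather than the $F^+$-constant speed. Finally, I must return to $g$ by conformal invariance, which preserves the null pregeodesics. The past-pointing case follows by time reversal $t\mapsto -t$, which swaps $\widehat\omega$ to $-\widehat\omega$ and hence $F^+$ to $F^-$.
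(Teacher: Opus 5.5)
Your argument is correct, and the normalization issue you flag is real. For an affinely parametrized null geodesic of $g$, the conserved Killing energy is $\beta\sqrt{\hat h(\dot x,\dot x)}$. So ``geodesic of $(M,g)$'' together with ``constant speed $\sqrt{\hat h(\dot x,\dot x)}$'' forces $\beta$ to be constant along $x$. The statement is therefore right for geodesics of $g/\beta$ (equivalently, for $g$-pregeodesics in that parametrization), which is exactly how you read it.

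The paper does not prove this theorem; it quotes \cite[Thm.~4.1]{cjm}. Your concrete computation is the standard direct proof of such a statement. The $t$-equation gives $\dot t-\widehat\omega(\dot x)=E$. With this substituted, the $x$-equation of $\tfrac12\bigl(-(\dot t-\widehat\omega(\dot x))^2+\hat h(\dot x,\dot x)\bigr)$ is $E$ times the Euler--Lagrange equation of the $F^+$-length at constant $\hat h$-speed. Once you have this, the arrival-time (Kovner--Perlick) paragraph is not needed.

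The only related derivation inside the paper is Remark~\ref{unified}, which takes a different route:
\begin{enumerate}
\item It stays with $g$ and treats null geodesics as the case $e=0$, $\mathcal A=0$ of Theorem~\ref{lorentz-lag}.
\item Then $x$ solves a mechanical system on $(S,h_0)$ with potential $-\varepsilon^2/(2\beta)$ and magnetic potential $\varepsilon\widehat\omega$, at zero total energy. This means $h_0(\dot x,\dot x)=\varepsilon^2/\beta$, i.e.\ precisely your $\beta\sqrt{\hat h(\dot x,\dot x)}=|\varepsilon|$.
\item It then applies the Jacobi--Randers principle, Theorem~\ref{jacprinc}, with Jacobi metric $\varepsilon^2\hat h$. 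This yields pregeodesics of $\varepsilon F^+$ or $-\varepsilon F^-$, according to the sign of $\varepsilon$.
\end{enumerate}
That route needs a Maupertuis-type reparametrization and naturally lands on the $g$-affine parameter rather than the constant-$\hat h$-speed one. Its advantage is that it treats photons and charged massive particles in one framework. Your route is shorter, self-contained, and produces exactly the parametrization in the statement.
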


\section{Photon surfaces as totally geodesic surfaces}\label{photonsurf}
Let us recall the notion of a photon surface in \cite{Claudel2001}. Let $\mathscr T$ be a $C^2$ non-degenerate hypersurface in $(M,g)$ with unit normal $n$. Let $g_{\scrT}$ be the metric induced by $g$ on $\scrT$ and $\Pi$ be the \emph{second fundamental form}, $\Pi(X,Y):=g(\nabla_X n,\,Y)$, for all $X,Y\in \Gamma(T\mathscr T)$.

\begin{definition}\label{photonsurface}
	A nowhere spacelike, immersed hypersurface $\scrS \subset M$ is called a \emph{photon surface} if every null geodesic that is tangent to $\scrS$ at some point remains entirely contained within $\scrS$.
\end{definition}

We are going to consider photon surfaces that are also invariant under a timelike Killing vector field. So we give the following definition.

\begin{definition}[Killing-invariant hypersurface]
	Let $(M,g)$ be a spacetime with a complete Killing vector field $K$ and flow $\{\Phi_s\}_{s\in\mathbb{R}}$. A $C^2$, embedded, timelike hypersurface $\Sigma\subset M$ is \emph{$K$-invariant} if $\Phi_s(\Sigma)=\Sigma$ for all $s\in\mathbb{R}$. Equivalently,
	\[
	K_p \in T_p\Sigma \quad \text{for all } p\in\Sigma.
	\]
\end{definition}

\begin{remark}[Reduction under the standard stationary splitting]
	Under the hypotheses in \cite{JavSan2008} (complete $K$ and $(M,g)$ distinguishing), working in the standard stationary representative with $M\cong\mathbb{R}\times S$ and $K=\partial_t$, we see that any $K$-invariant hypersurface is of the form
	\[
	\Sigma=\mathbb{R}\times S_0,
	\]
	where $S_0$ is a $C^2$ codimension-one submanifold of $S$.
\end{remark}

\subsection{Characterization of photon surfaces}
We recall the notions of umbilicity for a non-degenerate hypersurface in a spacetime $(M,g)$.

\begin{definition}
	A non-degenerate hypersurface $\scrS$ is called \emph{totally umbilical} if $\Pi = H g_{\scrS}$ where $H := \frac{1}{n-1}g_{\scrS}^{\alpha\beta}\Pi_{\alpha\beta}$ is the mean curvature.
\end{definition}

It was proved in \cite{Claudel2001} that a timelike hypersurface $\scrT$ is a photon surface if and only if it is totally umbilical. In a standard stationary spacetime, for a $\partial_t$-invariant hypersurface $\scrT=\R\times S_0$, we can show that these notions are also equivalent to $S_0$ being totally geodesic for the Fermat metrics $F^\pm$. In fact, the Fermat metrics $F^\pm$ in \eqref{Fermat} induce two Finsler metrics $F^\pm_0$ on $S_0$ by $F^\pm_0(v):=F^\pm(v)$ for all $v\in TS_0$ and $S_0$ is totally geodesic if, by definition, any geodesic of $F^\pm_0$ on $S_0$ is also a geodesic of $F^\pm$.

\begin{theorem}[$\partial_t$-invariant photon surfaces]\label{thm:photonumb}
	Let $\scrT = \R \times S_0$ be a $\partial_t$-invariant hypersurface in a stationary spacetime $(\R \times S, g)$. The following are equivalent:
	\begin{enumerate}
		\item $\scrT$ is a photon surface;
		\item $\scrT$ is totally umbilical;
		\item $S_0$ is totally geodesic with respect to both the Fermat metrics $F^\pm$.
		\end{enumerate}
\end{theorem}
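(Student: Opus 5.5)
The equivalence (1)$\Leftrightarrow$(2) is the theorem of Claudel, Virbhadra and Ellis \cite{Claudel2001}, valid for any timelike hypersurface, so I will simply invoke it. The new content is (1)$\Leftrightarrow$(3), and I plan to derive it from the Fermat principle, Theorem~\ref{thm:Fermat}. Since that theorem is stated for curves joining a point to a flow line, I will use its local, pointwise content. A curve $\gamma=(t,x)$ with $\dot t=F^{\pm}(x,\dot x)$ (future-pointing) or $\dot t=-F^-(x,\dot x)$ (past-pointing) is null and is a geodesic exactly when $x$ is an $F^{\pm}$-pregeodesic with $\hat h(\dot x,\dot x)$ constant. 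A local version follows from the same variational argument, because null geodesics are critical points of the arrival time among null curves ending on a timelike line.

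For (3)$\Rightarrow$(1), let $\gamma$ be a null geodesic tangent to $\scrT$ at $p=(t_0,x_0)$, say future-pointing. Its tangent is $\dot\gamma(0)=(\dot t(0),v)$ with $v\in T_{x_0}S$, and $v\ne 0$ because $\partial_t$ is timelike. Tangency to $\scrT=\R\times S_0$ gives $v\in T_{x_0}S_0$. Take the $F^+_0$-geodesic $c$ in $S_0$ with $\dot c(0)=v$. By (3), $c$ is an $F^+$-geodesic, and after reparametrizing it to constant $\hat h$-speed it becomes an $F^+$-pregeodesic. Then $\tilde\gamma=(t_0+\int F^+(c,\dot c),\,c)$ is a future null geodesic contained in $\scrT$ with the same initial velocity as $\gamma$, up to an affine rescaling. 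Uniqueness of geodesics then gives $\gamma=\tilde\gamma$ on the maximal interval where $c$ stays in $S_0$. A continuation and connectedness argument keeps $\gamma$ in $\scrT$ as long as it is defined. Past-pointing null geodesics are handled in the same way with $F^-$. Hence $\scrT$ is a photon surface.

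For (1)$\Rightarrow$(3), let $c$ be an $F^+_0$-geodesic in $S_0$ and form $\tilde\gamma=(t_0+\int F^+(c,\dot c),c)$. This is a future null curve lying in $\scrT$. Let $\gamma$ be the null geodesic of $M$ with the same initial velocity. By (1), $\gamma$ stays in $\scrT$, so by the Fermat principle its projection $x$ is an $F^+$-pregeodesic lying in $S_0$. An $F^+$-geodesic lying in $S_0$ is automatically a critical point of $F^+_0$-length, since variations inside $S_0$ are admissible, and so it is an $F^+_0$-pregeodesic. Because $x$ and $c$ have the same initial direction, uniqueness of $F^+_0$-geodesics gives $x=c$ up to reparametrization. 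Therefore $c$ is an $F^+$-pregeodesic, and being an $F^+_0$-geodesic it has constant $F^+$-speed, so it is an $F^+$-geodesic. Past-pointing null geodesics give the same conclusion for $F^-$.

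The main obstacle is the careful bookkeeping of parametrizations. Fermat parametrizes by constant $\hat h$-speed, while Finsler geodesics have constant $F$-speed, so I must check that the notion of pregeodesic is the one being matched. The other delicate point is the local form of Theorem~\ref{thm:Fermat} near a single point. I would settle that by directly comparing the Euler--Lagrange equations of $g$ for $\partial_t$-invariant null curves with those of $F^\pm$, using that $\dot t-\hat\omega(\dot x)=\sqrt{\hat h(\dot x,\dot x)}$ together with the conservation of $g(\dot\gamma,\partial_t)$.
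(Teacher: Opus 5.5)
Your proposal is correct and follows essentially the paper's route: (1)$\Leftrightarrow$(2) is quoted from Claudel--Virbhadra--Ellis, and (1)$\Leftrightarrow$(3) is obtained through the Fermat principle (Theorem~\ref{thm:Fermat}), which turns confinement of null geodesics in $\scrT=\R\times S_0$ into confinement of $F^\pm$-pregeodesics in $S_0$. Your extra bookkeeping only makes explicit steps the paper leaves implicit. This includes the uniqueness argument linking ``$F^\pm$-geodesics tangent to $S_0$ stay in $S_0$'' to the paper's definition of total geodesicity, and the observation that $v\neq 0$; also, no separate local Fermat principle is needed, since Theorem~\ref{thm:Fermat} already applies to every initial segment of a null geodesic.
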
    

\begin{proof}
	The equivalence $(1)\Leftrightarrow(2)$ is \cite[Th. II.1]{Claudel2001}. 
	
	Let us show that $(1) \Leftrightarrow (3)$. Suppose $\scrT$ is a photon surface. Let $p \in S_0$ and consider any tangent vector $v \in T_p S_0$. We can associate with $v$ a future-pointing (resp.\ past-pointing) vector in $T_{(t_0,p)} \scrT$ for any $t_0 \in \R$, by taking $(F^+(v), v)$ (resp.\ $(-F^-(v), v)$). By the Fermat principle (Theorem~\ref{thm:Fermat}), the null geodesic through $(t_0,p)$ with tangent direction $(F^+(v), v)$ (resp.\ $(-F^-(v), v)$) projects to a pregeodesic of the Fermat metric $F^+$ (resp.\ $F^-$) starting at $p$ with tangent vector $v$. Since $\scrT$ is a photon surface, this null geodesic remains in $\scrT$, which means its projection remains in $S_0$. Therefore, the Fermat geodesic starting at $p$ with tangent vector $v$ remains in $S_0$, proving that $S_0$ is totally geodesic for $F^+$ (resp.\ $F^-$).
	
	Conversely, suppose $S_0$ is totally geodesic with respect to $F^+$ (resp.\ $F^-$). Let $\gamma$ be a future-pointing (resp.\ past-pointing) null geodesic tangent to $\scrT$ at some point. By Theorem~\ref{thm:Fermat}, the  projection of $\gamma$ is a pregeodesic of $F^+$ (resp.\ $F^-$). Since this pregeodesic starts tangent to $S_0$ and $S_0$ is totally geodesic, the  projection remains in $S_0$. Therefore, $\gamma$ remains in $\scrT = \R \times S_0$.
	
	\end{proof}

\section{Massive particle surfaces at fixed charge-to-mass ratio and energy}\label{secMPS}
Let $(M,g)$ be a stationary spacetime with Levi-Civita connection $\nabla$ and a future-pointing timelike Killing vector field $K$.
Let $\mathcal A$ be a stationary electromagnetic potential ($\mathcal L_K \mathcal A=0$) with electromagnetic field $F=d\mathcal A$.
For a  timelike worldline $\gamma=\gamma(\tau)$ parametrized by proper time,  i.e.  $g(\dot\gamma,\dot\gamma)=-1$, let us fix  the \emph{charge-to-mass ratio} $\rho:=q/m$.
Then $\gamma$ obeys the Lorentz force law:
\begin{equation}\label{eqLorentz}
	\nabla_{\dot\gamma} \dot\gamma=-\rho\,\big(\iota_{\dot\gamma} F\big)^\sharp,
	\end{equation}
where $\iota_{\dot\gamma} F:=F(\dot\gamma,\cdot)$ and ${}^\sharp$ is the musical isomorphism ${}^\sharp:T^*M\to TM$ associated with $g$. 
As $F$ is antisymmetric, it is immediate to see that  arbitrary solutions $\gamma$ of \eqref{eqLorentz} have a well-defined causal character (see also \eqref{causalchar} below).
Since $\mathcal A$ is stationary, it is also well-known that they satisfy a conservation law  (see e.g. \cite[Theor. 2.1]{CaMa02} in standard stationary spacetimes, \cite{CaMaPi04} for any stationary spacetime and \cite{CapCor2023} for solutions of the Euler-Lagrange equations  of the action functional of an indefinite  $C^1$-Lagrangian admitting an infinitesimal symmetry).  However,  for the sake of the reader's convenience, we show this explicitly in the following proposition.
\begin{proposition}
	Let $(M,g)$ be a stationary spacetime with a timelike Killing vector field $K$ and let $\mathcal A$ be a stationary electromagnetic potential, i.e.\ $\mathcal L_K\mathcal A=0$, with $F=d\mathcal A$.
	Let $\gamma=\gamma(s)$   solve \eqref{eqLorentz}.
	Then the quantity 
	\begin{equation}\label{constmotion}
		\varepsilon:=-g(K,\dot\gamma)-\rho\,\mathcal A(K)
	\end{equation}
	is constant along $\gamma$.
	
	More generally, if $\mathcal L_K\mathcal A=d\psi$ for some smooth function $\psi$, then
	$\varepsilon+\rho\,\psi$ is constant along $\gamma$.
\end{proposition}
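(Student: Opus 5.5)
The plan is to differentiate $\varepsilon$ along $\gamma$ and show that the derivative vanishes. There are two contributions: the Killing term $-g(K,\dot\gamma)$ and the potential term $-\rho\,\mathcal A(K)$. The first is handled by the Killing equation together with the Lorentz force law \eqref{eqLorentz}; the second by Cartan's formula for $\Lie_K\mathcal A$. The antisymmetry of $F$ then makes everything cancel.

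For the first term, compatibility of $\nabla$ with $g$ gives
\[
\tfrac{d}{ds}g(K,\dot\gamma)=g(\nabla_{\dot\gamma}K,\dot\gamma)+g(K,\nabla_{\dot\gamma}\dot\gamma).
\]
The piece $g(\nabla_{\dot\gamma}K,\dot\gamma)$ vanishes because $K$ is Killing, so $\nabla K$ is $g$-antisymmetric. Substituting \eqref{eqLorentz} into the remaining piece gives
\[
g\big(K,-\rho(\iota_{\dot\gamma}F)^\sharp\big)=-\rho F(\dot\gamma,K)=\rho F(K,\dot\gamma).
\]
Hence
\[
\tfrac{d}{ds}\big(-g(K,\dot\gamma)\big)=-\rho F(K,\dot\gamma).
\]

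For the second term, Cartan's formula reads
\[
\Lie_K\mathcal A=\iota_K d\mathcal A+d(\iota_K\mathcal A)=\iota_K F+d(\mathcal A(K)).
\]
Therefore
\[
\tfrac{d}{ds}\mathcal A(K)(\gamma(s))=d(\mathcal A(K))(\dot\gamma)=(\Lie_K\mathcal A)(\dot\gamma)-F(K,\dot\gamma).
\]
When $\Lie_K\mathcal A=0$ this gives
\[
\tfrac{d}{ds}\big(-\rho\,\mathcal A(K)\big)=\rho F(K,\dot\gamma),
\]
which exactly cancels the first contribution, so $\dot\varepsilon=0$.

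In the general case $\Lie_K\mathcal A=d\psi$, the same computation leaves an extra term $-\rho\,d\psi(\dot\gamma)=-\rho\,\tfrac{d}{ds}\psi(\gamma)$. Consequently $\varepsilon+\rho\psi$ is constant along $\gamma$.

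There is no real obstacle in this argument. The only delicate point is bookkeeping signs: the convention $\iota_{\dot\gamma}F=F(\dot\gamma,\cdot)$, the musical isomorphism, and the orientation in Cartan's formula must be tracked consistently so that the two $F(K,\dot\gamma)$ terms appear with opposite signs. The argument nowhere uses the proper-time normalization, so it holds for any solution of \eqref{eqLorentz}, as the statement asserts.
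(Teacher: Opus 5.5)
Your proof is correct and is essentially the paper's argument: the Killing equation removes $g(\nabla_{\dot\gamma}K,\dot\gamma)$, the Lorentz force law turns the remaining term into $\rho F(K,\dot\gamma)$, and Cartan's formula $\mathcal L_K\mathcal A=\iota_K F+d(\mathcal A(K))$ supplies the cancelling term. In the general case it leaves the extra $-\rho\, d\psi(\dot\gamma)$, and your sign bookkeeping checks out.
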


\begin{proof}
	Since $K$ is Killing, $g(\nabla_{\dot\gamma}K,\dot\gamma)=0$, hence differentiating $g(K,\dot\gamma)$ along $\gamma$, we get:
	\[
	\frac{d}{ds}g(K,\dot\gamma)=g(K,\nabla_{\dot\gamma}\dot\gamma).
	\]
	Using \eqref{eqLorentz} we obtain
	\[
	\frac{d}{ds}g(K,\dot\gamma)=-\rho\,(\iota_{\dot\gamma}F)(K)=-\rho\,F(\dot\gamma,K).
	\]
	On the other hand, by Cartan's formula,
	\[
	\mathcal L_K\mathcal A=\iota_K d\mathcal A + d(\mathcal A(K))=\iota_K F + d(\mathcal A(K)).
	\]
	If $\mathcal L_K\mathcal A=0$, then $d(\mathcal A(K))=-\iota_K F$, hence
	\[
	\frac{d}{ds}\mathcal A(K)=d(\mathcal A(K))(\dot\gamma)=-(\iota_KF)(\dot\gamma)=-F(K,\dot\gamma)=F(\dot\gamma,K).
	\]
	Therefore,
	\[
	\frac{d}{ds}\big(g(K,\dot\gamma)+\rho\,\mathcal A(K)\big)=
	-\rho F(\dot\gamma,K)+\rho F(\dot\gamma,K)=0.
	\]
	The  case $\mathcal L_K\mathcal A=d\psi$ is identical, since then
	$d(\mathcal A(K)-\psi)=-\iota_KF$.
\end{proof}
The constant \eqref{constmotion} is called {\em energy} (per unit mass) of the solution $\gamma$ of \eqref{eqLorentz}.
Its \emph{``kinetic part''} is
\begin{equation}\label{kineticpart}
	\mathcal E_k \;:=\; -\,g(K,\dot\gamma) \;=\; \varepsilon \;+\; \rho\,\mathcal A(K)
\end{equation}
but  in general it is not a constant of motion. 
\bere
Under a stationary gauge change $\mathcal A\mapsto \mathcal A+df$, $\mathcal L_K(df)=0$,  one has that $K(f)$ is constant and $\varepsilon\mapsto \varepsilon-\rho K(f)$.
\ere
\begin{remark}[Future vs.\ past orientation and energy]
	Let $p\in M$ and $u\in T_pM $ be a timelike unit vector. Let us consider the unique solution of \eqref{eqLorentz} satisfying the initial conditions $(p, u)$. 
	Then $\mathcal E_k>0$ iff $u$ is future-pointing (same time orientation as $K$),
	while $\mathcal E_k<0$ iff $u$ is past-pointing. The Lorentz force equation
	$\nabla_{\dot\gamma} \dot\gamma=-\rho(\iota_{\dot\gamma} F)^\sharp$ is not invariant under orientation reversing reparametrization of $\gamma$
	at fixed $\rho$; it becomes invariant under simultaneous reversal of orientation and charge sign  ($q\mapsto -q$).
\end{remark}

Let $\scrT\subset M$ be a $C^2$ orientable, embedded,  timelike  hypersurface. 
\begin{definition}[Massive particle surface at fixed $(\rho,\varepsilon)$]\label{rhoepsMPS}
Fix a charge-to-mass ratio $\rho\in\mathbb R$ and an  energy $\varepsilon\in\mathbb R$.
	The  hypersurface $\scrT$ is a \emph{$(\rho,\varepsilon)$–massive particle surface} (briefly, a $(\rho,\varepsilon)$–MPS)
	if for every $p\in \scrT$ and every timelike   $u\in T_p\scrT$,    with 	$g(u,u)=-1$ and $-g(K,u)-\rho\mathcal A_p(K)=\varepsilon$,
	the  solution of \eqref{eqLorentz} with that $\rho$, tangent at $p$ with velocity $u$, is contained in $\scrT$.
\end{definition}
\subsection{Extrinsic characterization of $(\rho,\varepsilon)$-MPS}
Let $g_\scrT$ be the metric induced by $g$ on $\scrT$ and let $\kappa$ be the orthogonal projection of the Killing field $K$ onto $T\scrT$; we set
$\kappa^\flat:=g_\scrT (\kappa,\cdot)$, $\kappa^2:=-g_\scrT (\kappa,\kappa)$ (hence  $\kappa\neq 0$).
\bd
Fix $\varepsilon,\rho\in\R$ and let $u\in T_p\mathscr T$. We call the vector $u$ \emph{admissible} if $g(u,u)=-1$ and $-g(u,K_p)=-g_\scrT(u,\kappa_p)=\varepsilon+\rho\A_p(K)$.
\ed
Let $\mathcal E_k:=\varepsilon +\rho\A(K)\neq 0$ on $\scrT$ and define the symmetric tensors on $T\mathscr T$
\beq\label{HF}
\H :=g_\scrT +\ \mathcal E_k^{-2}\,\kappa^\flat\!\otimes\kappa^\flat,\qquad
\mathcal F := \frac12\Big((\iota_n F)\otimes\kappa^\flat\ +\ \kappa^\flat\otimes(\iota_n F)\Big).
\eeq
Let $\mathcal V_p:=\{X\in T_p\mathscr T:\ g(X,\kappa)=0\}$ be the $(n\!-\!2)$–plane orthogonal to $\kappa$ (inside $T_p\mathscr T$).
The \emph{$\kappa$–orthogonal mean curvature} is
\beq\label{Hkappa}
H_\kappa(p)\ :=\ \sum_{a=1}^{n-2}\Pi(e_a,e_a),
\eeq
for any $g_{\scrT}$–orthonormal basis $\{e_a\}$ of $\mathcal V_p$.
The following result is essentially \cite[Th. 3.1]{KBG2022}: we have translated it  in coordinate-free form and for solutions of \eqref{eqLorentz} parametrized with respect to proper time and with fixed charge-to-mass ratio. 
\begin{theorem}\label{thmumbMPS}
Let $\varepsilon,\rho\in\R$ and set $\mathcal E_k:=\varepsilon+\rho\,\mathcal A(K)$.
Let us assume $\mathcal E_k^2>\kappa^2$ on $\scrT$ (hence $\mathcal E_k\neq 0$ on $\scrT$).
Then the following are equivalent:	
\begin{enumerate}\itemsep=2pt
		\item $\mathscr T$ is a $(\rho,\varepsilon)$–MPS;
		\item for every admissible $u\in T\mathscr T$, 
		\beq\label{Piu}
		 \Pi(u,u) =\rho\,F(u,n);\eeq
		\item
		\beq\label{Pi}
		\Pi\ =\ \frac{H_\kappa}{n-2}\H + \frac{\rho}{\mathcal E_k}\,\mathcal F.
		\eeq
		\end{enumerate}
	\end{theorem}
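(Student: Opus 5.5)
The plan is to prove $(1)\Leftrightarrow(2)$ by a standard tangency argument for the ODE \eqref{eqLorentz}, and $(2)\Leftrightarrow(3)$ by linear algebra on the set of admissible vectors.

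\emph{$(1)\Rightarrow(2)$.} Let $u$ be admissible at $p$, and let $\gamma$ be the solution of \eqref{eqLorentz} with $\dot\gamma(0)=u$. By (1), $\gamma$ stays in $\scrT$, so the normal component of its acceleration is the normal component of the ambient Lorentz force. Differentiating $g(\dot\gamma,n)=0$ along $\gamma$ gives
\[
g(\nabla_{\dot\gamma}\dot\gamma,n)=-g(\dot\gamma,\nabla_{\dot\gamma}n)=-\Pi(u,u),
\]
while \eqref{eqLorentz} gives
\[
g(\nabla_{\dot\gamma}\dot\gamma,n)=-\rho F(u,n).
\]
This is \eqref{Piu}.

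\emph{$(2)\Rightarrow(1)$.} Consider the intrinsic second-order ODE on $\scrT$:
\[
\nabla^{\scrT}_{\dot\sigma}\dot\sigma=-\rho\big(\iota_{\dot\sigma}F\big)^{\sharp,\top}.
\]
It preserves $g_\scrT(\dot\sigma,\dot\sigma)$, since $F$ is antisymmetric. It also preserves the energy: the proof of the conservation proposition uses only $\kappa$-components and $\mathcal L_K\A=0$, and $K$ is tangent-Killing when $\scrT$ is $K$-invariant. Without invariance one argues directly that $g(K,\dot\sigma)+\rho\A(K)$ is conserved, using $K=\kappa+g(K,n)n$ and the normal equation. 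Hence solutions starting from an admissible $u$ stay admissible. For such $\sigma$, the ambient acceleration is
\[
\nabla_{\dot\sigma}\dot\sigma=\nabla^{\scrT}_{\dot\sigma}\dot\sigma-\Pi(\dot\sigma,\dot\sigma)\,n,
\]
with sign fixed by the convention above. By \eqref{Piu}, its normal part matches $-\rho F(\dot\sigma,n)\,n$, so $\sigma$ solves \eqref{eqLorentz}. Uniqueness then gives $\gamma=\sigma\subset\scrT$.

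\emph{$(3)\Rightarrow(2)$.} This is a direct evaluation on admissible $u$, using
\[
\kappa^\flat(u)=-\mathcal E_k,\qquad \H(u,u)=-1+1=0,\qquad \mathcal F(u,u)=(\iota_nF)(u)\,\kappa^\flat(u)=-\mathcal E_k F(n,u).
\]
Hence
\[
\Pi(u,u)=\frac{\rho}{\mathcal E_k}\bigl(-\mathcal E_k F(n,u)\bigr)=\rho F(u,n).
\]

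\emph{$(2)\Rightarrow(3)$.} This is the main step. Decompose $u=a\,\hat\kappa+v$ with $\hat\kappa=\kappa/\kappa$ unit timelike and $v\in\mathcal V_p$. Admissibility forces $a=\mathcal E_k/\kappa$, which is fixed, and $|v|^2=a^2-1=(\mathcal E_k^2-\kappa^2)/\kappa^2$. The hypothesis $\mathcal E_k^2>\kappa^2$ makes the admissible set a full round sphere of positive radius $r$ in $\mathcal V_p\cong\R^{n-2}$.

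Set $Q:=\Pi-\frac{\rho}{\mathcal E_k}\mathcal F$. The computation in the previous step shows that (2) says exactly that $Q(a\hat\kappa+v,\,a\hat\kappa+v)=0$ for all $|v|=r$, i.e.
\[
a^2Q(\hat\kappa,\hat\kappa)+2a\,Q(\hat\kappa,v)+Q(v,v)=0\quad\text{for all }|v|=r.
\]
Replacing $v$ by $-v$ and subtracting shows the odd part vanishes: $Q(\hat\kappa,\cdot)|_{\mathcal V}=0$. Then $Q|_{\mathcal V}$ is constant on a sphere, so $Q|_{\mathcal V}=c\,g|_{\mathcal V}$, and the remaining relation is $a^2Q(\hat\kappa,\hat\kappa)+cr^2=0$. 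Taking the trace over $\mathcal V$, and noting that $\mathcal F$ vanishes on $\mathcal V\times\mathcal V$, gives $c=H_\kappa/(n-2)$. Finally, one compares with $\H$: on $\mathcal V$ one has $\H=g$, $\H(\hat\kappa,\mathcal V)=0$, and
\[
\H(\hat\kappa,\hat\kappa)=-1+\frac{\kappa^2}{\mathcal E_k^2}=-\frac{r^2}{a^2},
\]
which is consistent with $Q(\hat\kappa,\hat\kappa)=-cr^2/a^2$. Hence $Q=c\,\H$, which is \eqref{Pi}.

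The main obstacles are this polarization argument, which requires $n\ge3$ so that $\mathcal V\neq0$ and the sphere is nondegenerate, and bookkeeping the sign conventions for $\Pi$ and $F(u,n)$ versus $F(n,u)$.
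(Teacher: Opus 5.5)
\textbf{Gap in $(2)\Rightarrow(1)$ for non-invariant $\scrT$.} Your $(1)\Rightarrow(2)$, $(3)\Rightarrow(2)$ and $(2)\Rightarrow(3)$ are correct and essentially the paper's arguments. Writing (2) as $Q(u,u)=0$ with $Q=\Pi-\frac{\rho}{\mathcal E_k}\mathcal F$ neatly repackages the paper's expansion of $P(v)$, which computes $\Pi(\kappa,\cdot)|_{\kappa^\perp}$, $\Pi|_{\kappa^\perp\times\kappa^\perp}$ and $\Pi(\kappa,\kappa)$ one at a time. The problem arises because the theorem does not assume $\scrT$ is $K$-invariant; $\kappa$ is only the tangential part of $K$. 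Along a solution $\sigma$ of your intrinsic equation, the Killing property of $K$ in $M$ and $\mathcal L_K\mathcal A=0$ give
\[
\frac{d}{ds}\Big(-g(K,\dot\sigma)-\rho\,\mathcal A(K)\Big)=g(K,n)\,\big(\Pi(\dot\sigma,\dot\sigma)-\rho\,F(\dot\sigma,n)\big).
\]
Hypothesis (2) makes the bracket vanish only at velocities that are already admissible, i.e.\ already on the energy level $\varepsilon$. So deducing energy conservation from ``the normal equation'' is circular as written: (2) is available exactly on the set whose invariance you are trying to prove. In the $K$-invariant case $g(K,n)=0$, and your argument is complete as it stands.

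\textbf{How to close it.} Work on $T\scrT$. Let $\mathcal Y$ generate your intrinsic flow, set $N(u)=g(u,u)$ and $E(u)=-g(K,u)-\rho\,\mathcal A(K)$, and let $\mathcal M=\{N=-1,\ E=\varepsilon\}$. Then $dN(\mathcal Y)\equiv 0$, and $dE(\mathcal Y)=0$ on $\mathcal M$ by the display above together with (2). On vertical vectors $w\in T_x\scrT$ one has $dN(w)=2g(u,w)$ and $dE(w)=-g(\kappa,w)$. These are independent unless $u$ is collinear with $\kappa$, and for admissible $u$ collinearity would force $\mathcal E_k^2=\kappa^2$. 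Hence $\mathcal E_k^2>\kappa^2$ makes $\mathcal M$ an embedded submanifold to which $\mathcal Y$ is tangent. Integral curves starting in $\mathcal M$ therefore stay in $\mathcal M$, and your matching of normal components plus ambient uniqueness finishes the proof.

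\textbf{Comparison with the paper.} This is where your route genuinely differs. The paper uses the ambient Lorentz spray on $TM$, for which $g(\dot\gamma,\dot\gamma)$ and $\varepsilon$ are conserved automatically, and checks that $\mathcal L_{\mathcal X}\Phi$ and $\mathcal L_{\mathcal X}\Psi$ vanish on $\mathcal W$. Your intrinsic approach avoids the tubular neighborhood and the functions $\Phi,\Psi$, but it must propagate admissibility by hand. Note that in the paper too, (2) gives $\mathcal L_{\mathcal X}\Psi=0$ only at admissible points of $\mathcal W$. Turning that into invariance implicitly relies on the same transversality coming from $\mathcal E_k^2>\kappa^2$.
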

\begin{proof}
\noindent\emph{$(1)\Rightarrow(2)$.}
Let $u\in T\scrT$ be an admissible vector and take the solution $\gamma=\gamma(\tau)$ of  \eqref{eqLorentz} with initial conditions defined by $u$. 
By decomposing the ambient covariant derivative
\[
\nabla_{\!u}u=\nabla^{\scrT}_{\!u}u-\Pi(u,u)\,n,
\]
and projecting on $n$,  using $g(F(u,\cdot)^{\sharp},n)=F(u,n)$, the tangency condition is equivalent to
\[
\Pi(u,u)=\rho\,F(u,n).
\]
\smallskip\noindent\emph{$(2)\Rightarrow(3)$.}
Fix $p\in\scrT$ and write any admissible $u\in T_p\scrT$ as
$u=\alpha\,\kappa+v$ with $v\perp\kappa$ (with respect to $g_{\scrT}$).
From  $g_{\scrT}(u,\kappa)=-\mathcal E_k$ and $g_{\scrT}(u,u)=-1$ we get
\beq\label{alpha}
\alpha=\frac{\mathcal E_k}{\kappa^2},\qquad
g_{\scrT}(v,v)=\alpha^2\kappa^2-1=:r^2>0,
\eeq
hence $v$ ranges on the sphere $S_r:=\{\,v\in\kappa^\perp:\ g_{\scrT}(v,v)=r^2\,\}$.
Set, for $v\in\kappa^\perp$,
\[
P(v):=\Pi(u,u)-\rho F(u, n),\qquad u=\alpha\kappa+v.
\]
Assumption (2) is equivalent to $P(v)=0$ for all $v\in S_r$. Expanding by bilinearity,
\[
P(v)=\Pi(v,v)+2\alpha\,\Pi(\kappa,v)+\alpha^2\Pi(\kappa,\kappa)
+\rho\alpha\,\iota_n F(\kappa)+\rho\,\iota_n F(v).
\]
Replace $v$ by $-v$ and subtract:
\[
0=P(v)-P(-v)=4\alpha\,\Pi(\kappa,v)+2\rho\,\iota_n F(v)\qquad (v\in S_r).
\]
The right-hand side is linear in $v$ and vanishes on the sphere $S_r$,
hence it vanishes on all of $\kappa^\perp$; thus
\beq\label{Pikappa}
\Pi(\kappa,\cdot)=-\frac{\rho}{2\alpha}\,\iota_n F\quad\text{on }\kappa^\perp.
\eeq
On the other hand,
\beq\label{Fkappa}
\mathcal F(\kappa,v)=\frac12\big(\iota_n F(\kappa)\kappa^\flat(v)+\kappa^\flat(\kappa)\iota_n F(v)\big)
=\frac12\,g_{\scrT}(\kappa,\kappa)\,\iota_n F(v)=-\frac{\kappa^2}{2}\,\iota_n F(v).
\eeq
Since $\alpha=\mathcal E_k/\kappa^2$,  \eqref{Pikappa} and \eqref{Fkappa} give:
\[
\Pi(\kappa,\cdot)=\frac{\rho}{\mathcal E_k}\mathcal F(\kappa,\cdot)\quad\text{on }\kappa^\perp.
\]
We also have 
\beq\label{PvPmenov}
0=P(v)+P(-v)=2\,\Pi(v,v)+2\alpha^2\Pi(\kappa,\kappa)+2\rho\alpha\,\iota_n F(\kappa)\qquad(v\in S_r).
\eeq
Hence the quadratic form $v\mapsto\Pi(v,v)$ is constant on the sphere $S_r$.
Therefore, on the inner product space $(\kappa^\perp,g_{\scrT}|_{\kappa^\perp})$,
\beq\label{Piperp}
\Pi|_{\kappa^\perp\times\kappa^\perp}=\lambda\,g_{\scrT}|_{\kappa^\perp\times\kappa^\perp},
\qquad \lambda=\frac{H_\kappa}{n-2}
\eeq
(by taking the $g_{\scrT}$–trace on $\kappa^\perp$).
\noindent We still have to determine  $\Pi(\kappa,\kappa)$.
From \eqref{PvPmenov} and \eqref{Piperp}, recalling that $g_{\scrT}(v,v)=r^2=\alpha^2\kappa^2-1$, we get
\[
\alpha^2\,\Pi(\kappa,\kappa)
=-\,\lambda\,r^2\;-\;\rho\alpha\,\iota_n F(\kappa)
=-\,\frac{H_\kappa}{n-2}\,(\alpha^2\kappa^2-1)\;-\;\rho\alpha\,\iota_n F(\kappa).
\]
Dividing by $\alpha^2$ and using $\alpha=\mathcal E_k/\kappa^2$,
\[
\Pi(\kappa,\kappa)
=-\,\frac{H_\kappa}{n-2}\Big(\kappa^2-\frac{\kappa^4}{\mathcal E_k^2}\Big)
\;-\;\frac{\rho}{\mathcal E_k}\,\kappa^2\,\iota_n F(\kappa).
\]
From \eqref{HF}, we get:
\[
\mathcal H(\kappa,\kappa)=-\kappa^2+\frac{\kappa^4}{\mathcal E_k^2},\qquad
\mathcal F(\kappa,\kappa)=-\,\kappa^2\,\iota_n F(\kappa).
\]
Therefore
\[
\Pi(\kappa,\kappa)
=\frac{H_\kappa}{n-2}\,\mathcal H(\kappa,\kappa)
+\frac{\rho}{\mathcal E_k}\,\mathcal F(\kappa,\kappa).
\]
Together with
$\Pi|_{\kappa^\perp\times\kappa^\perp}=\frac{H_\kappa}{n-2}\,g_{\scrT}|_{\kappa^\perp\times\kappa^\perp}$
and $\Pi(\kappa,\cdot)=\frac{\rho}{\mathcal E_k}\,\mathcal F(\kappa,\cdot)$ on $\kappa^\perp$,
and since $\mathcal F|_{\kappa^\perp\times\kappa^\perp}=0$ and
$\mathcal H|_{\kappa^\perp\times\kappa^\perp}=g_{\scrT}|_{\kappa^\perp\times\kappa^\perp}$,
we conclude
that  \eqref{Pi} holds.

\smallskip\noindent\emph{$(3)\Rightarrow(2)$.}
Let $u$ be admissible. Then by definition of $\H$,
\[
\H(u,u)=g_\scrT(u,u)+\mathcal E_k^{-2}\,\kappa^\flat(u)^2=-1+\mathcal E_k^{-2}\mathcal E_k^2=0.
\]
Therefore, evaluating \eqref{Pi} on $(u,u)$ and using
\[
\mathcal F(u,u)=\iota_n F(u)\,\kappa^\flat(u)=\iota_n F(u)\,g(u,\kappa)=-\mathcal E_k\,\iota_n F(u),
\]
we obtain $\Pi(u,u)=-(\rho/\mathcal E_k)\,\mathcal E_k\,\iota_n F(u)=-\rho\,\iota_n F(u)$, i.e.\ (2).

\medskip\noindent\emph{$(2) \Rightarrow (1)$.}
Let $\pi:TM\to M$ be the projection. The Lorentz–force dynamics is generated by the
$C^1$ spray $\mathcal X$ on $TM$ that in local coordinates is defined by:
\[
\mathcal X(x,u)=\Big(u\,,\, -\Gamma^i_{jk}(x)\,u^j u^k\,\partial_{u^i}
\;+\;\rho\,g^{i\ell}(x)\,F_{\ell m}(x)\,u^m\,\partial_{u^i}\Big),
\]
i.e. along integral curves $(x(\tau),u(\tau))$ of $\mathcal X$ we have $\dot x=u$ and
$\nabla_u u=-\rho\,(\iota_u F)^\sharp$.
Let us fix a tubular neighborhood $\mathcal U$ via the normal exponential map
$\mathfrak E:(-\varepsilon,\varepsilon)\times\scrT\to M$, $\mathfrak E(s,p)=\exp_p(s\,n_p)$, 
and define the function $\phi:\mathcal U\to \R$ via $\phi(\mathfrak E(s,p))=s$. Then $\phi=0$ and  $\nabla\phi=n$ on $\scrT$, moreover 
$g(\nabla\phi,\nabla\phi)=1$ on $\mathcal U$. In particular, for $X,Y\in T\scrT$,
$\operatorname{Hess}\phi(X,Y)=g(\nabla_X\nabla\phi,Y)=g(\nabla_X n,Y)=\Pi(X,Y)$.
Let us consider  the functions 
\[
\Phi:=\phi\circ\pi,\qquad
\Psi(x,u):=g_{\,\pi(x,u)}\big(\nabla \phi(\pi(x,u)),\,u\big),
\]
where $\pi:TM\to M$ is the canonical projection.
Let $\mathcal W$ defined by 
$\mathcal W=\{(x,u)\in TM:\ \Phi=0,\ \Psi=0\}=\{x\in\scrT,\ u\in T_x\scrT\}$.
We compute the Lie derivatives with respect to $\mathcal X$ on $\mathcal U$:
\[
\Lie_{\mathcal X}\Phi
= d(\phi\circ\pi)(\mathcal X)
= d\phi_x\!\big(d\pi\,\mathcal X\big)
= d\phi_x(u)=g(\nabla \phi,u),
\]
and, using metric compatibility,
\[
\Lie_{\mathcal X}\Psi
= g(\nabla_u \nabla \phi,u)+g\big(\nabla\phi,\nabla_u u\big)
= \operatorname{Hess}\phi_x(u,u)-g\big(\nabla\phi,\rho(\iota_u F)^\sharp\big).
\]
Thus \eqref{Piu} gives
$\Lie_{\mathcal X}\Psi=0$ on $\mathcal W$. As $\Lie_{\mathcal X}\Phi=0$ on $\mathcal W$ as well, we conclude that the integral curves of $\mathcal X$ starting at $\mathcal W$ remain on $\mathcal W$.
\end{proof}

\begin{remark}[Degenerate case $\mathcal E_k^2=\kappa^2$]
	In the proof of $(2)\Rightarrow(3)$ we used that the admissible velocities at $p$
	form a non-degenerate sphere $S_r\subset\kappa^\perp$, where
	\[
	r^2=\frac{\mathcal E_k^2}{\kappa^2}-1.
	\]
	If $\mathcal E_k^2=\kappa^2$, then $r=0$ and necessarily $v=0$ in the decomposition
	$u=\alpha\kappa+v$; hence the admissible velocity at $p$ is given as
	\[
	u=\pm \frac{\kappa}{\sqrt{\kappa^2}},
	\]
	with either $+$ or $-$ according to  the sign of $\mathcal E_k$.
	In this degenerate situation, condition \eqref{Piu}  does not determine the tensor $\Pi$ on
	$\kappa^\perp$. Therefore the implication $(2)\Rightarrow(3)$ cannot be derived
	from \eqref{Piu} when $\mathcal E_k^2=\kappa^2$ (while $(3)\Rightarrow(2)$ still holds).
\end{remark}

\subsection{On some weak umbilicity notions}
We recall the $k$--umbilical notion for hypersurfaces in  \cite{ColaresPalmas2008}\footnote{We notice that the authors in \cite{ColaresPalmas2008} consider only spacelike hypersurfaces but the notion can clearly be extended to a timelike hypersurface.}.   A non-degenerate, immersed   hypersurface $(\scrT,g_{\scrT})$ in $(M,g)$ is \emph{$k$--umbilical} if there exists a rank-$k$ distribution $D\subset T\scrT$ such that the shape operator $S(X):=-\nabla_X n$, $X\in\mathfrak X(\scrT)$ is scalar on $D$, i.e. $S|_D=\lambda\,\mathrm{Id}_D$ (equivalently, $\Pi|_{D\times D}=-\lambda\,g_{\scrT}|_{D\times D}$). 

In our setting, let  $D:=\kappa^\perp\subset T\scrT$, we have:

\begin{proposition}
If $\scrT$ is a $(\rho, \varepsilon)$-MPS, then
\begin{equation*}
\Pi(v,w)=\lambda\,g_{\scrT}(v,w)\qquad\forall\,v,w\in D=\kappa^\perp,
\end{equation*}
where  $\lambda=\frac{1}{n-2}\,\mathrm{tr}_{g_{\scrT}|_D}\Pi$. In other words, $\scrT$ is $(n-2)$--umbilical with respect to $D$.
\end{proposition}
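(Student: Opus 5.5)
The plan is to read this off directly from Theorem~\ref{thmumbMPS}, specifically from the step $(1)\Rightarrow(2)\Rightarrow(3)$. I will work under the standing hypothesis of that theorem, namely $\mathcal E_k^2>\kappa^2$ on $\scrT$. This hypothesis is implicit here, since the preceding Remark shows that without it the admissible velocities degenerate and carry no information on $\kappa^\perp$.

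First, since $\scrT$ is a $(\rho,\varepsilon)$-MPS, the implication $(1)\Rightarrow(2)$ gives \eqref{Piu} for every admissible $u$. Fix $p\in\scrT$ and decompose admissible vectors as $u=\alpha\kappa+v$ with $\alpha=\mathcal E_k/\kappa^2$ and $v\in S_r\subset\kappa^\perp$, where $r^2=\alpha^2\kappa^2-1>0$. Every $v\in S_r$ yields an admissible $u$, so $P(v)=0$ on the whole sphere.

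Adding $P(v)$ and $P(-v)$ as in \eqref{PvPmenov} kills the terms that are odd in $v$. It shows that $\Pi(v,v)$ equals the $v$-independent constant $-\alpha^2\Pi(\kappa,\kappa)-\rho\alpha\,\iota_nF(\kappa)$ for every $v\in S_r$.

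Next I pass from the sphere to the whole space. The space $\kappa^\perp$ is $g_\scrT$-orthogonal to the timelike vector $\kappa$, so $g_\scrT|_{\kappa^\perp}$ is positive definite. A symmetric bilinear form whose quadratic form is constant, say equal to $c$, on a sphere of radius $r>0$ of a Euclidean space satisfies $\Pi(v,v)=(c/r^2)\,g_\scrT(v,v)$ for all $v$, by homogeneity. Polarization then gives $\Pi(v,w)=\lambda\,g_\scrT(v,w)$ on $D\times D$ with $\lambda=c/r^2$. Taking the $g_\scrT$-trace over an orthonormal basis of $D$, which has dimension $n-2$, identifies $\lambda=\frac{1}{n-2}\mathrm{tr}_{g_\scrT|_D}\Pi=H_\kappa/(n-2)$, as in \eqref{Piperp}.

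The only delicate point is ensuring $r>0$, that is, that the sphere $S_r$ is nondegenerate. This is exactly the hypothesis $\mathcal E_k^2>\kappa^2$, and I would state it explicitly. Finally, smoothness of $D$ as a rank-$(n-2)$ distribution follows because $\kappa$ is a nowhere-vanishing $C^1$ vector field on $\scrT$. This gives $(n-2)$-umbilicity in the sense of \cite{ColaresPalmas2008}, up to the sign convention $\lambda\leftrightarrow-\lambda$ relating $\Pi$ and the shape operator.
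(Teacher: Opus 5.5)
Your proof is correct and is essentially the paper's argument: the paper restricts \eqref{Pi} to $D\times D$, where $\kappa^\flat$ vanishes so that $\mathcal H=g_\scrT$ and $\mathcal F=0$, while you unroll the part of the proof of Theorem~\ref{thmumbMPS} (the identity \eqref{PvPmenov} and the sphere argument leading to \eqref{Piperp}) that produces exactly this block. Your explicit insistence on $\mathcal E_k^2>\kappa^2$ is well placed, since the paper's one-line proof also needs it, tacitly, through the hypothesis of Theorem~\ref{thmumbMPS}.
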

\begin{proof}
From \eqref{Pi}, taking into account the definitions of $\H$ and $\mathcal F$ in \eqref{HF} and $H_\kappa$ in \eqref{Hkappa}, we obtain for  all $v, w\in D$, 
$
\Pi(v,w)=\frac{H_\kappa}{n-2}g_{\scrT}(v,w)
$
i.e. $\scrT$ is $(n-2)$--umbilical with respect to $D$.
\end{proof}
Another weak form of  umbilicity is Chen's {\em quasi-umbilicity} notion \cite[Ch. 5]{Chen1973}. It was introduced for a submanifold of a Riemannian manifold but it can be easily extended to non-degenerate hypersurfaces  of a Lorentzian manifold.
We say that a non-degenerate immersed hypersurface $\scrT\subset M$ is \emph{quasi-umbilical} if there exist functions $\lambda,\mu$ and a  unit one-form $\vartheta$ on $T\scrT$ (i.e. $g_\scrT(\vartheta^\sharp, \vartheta^\sharp)\in\{-1,1\}$) such that
\[
\Pi=\lambda\,g_{\scrT}+\mu\,\vartheta\otimes\vartheta.
\]
In particular,  the shape operator is diagonalisable with eigenvalue 
$\lambda$ on $\ker \vartheta$
and a simple eigenvalue along $\vartheta^\sharp$.
\begin{proposition}
	A $(\rho,\varepsilon)$--MPS $\scrT$ is quasi-umbilical if and only if either $\rho=0$ or 
	$\iota_n F$ is pointwise collinear with $\kappa^\flat$, i.e.\ $\iota_n F=c\,\kappa^\flat$ for some non vanishing function  $c$.
\end{proposition}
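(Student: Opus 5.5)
The plan is to start from the structure formula \eqref{Pi} of Theorem~\ref{thmumbMPS}, which holds for any $(\rho,\varepsilon)$-MPS under the standing assumption $\mathcal E_k^2>\kappa^2$:
\[
\Pi=\frac{H_\kappa}{n-2}\,g_\scrT+\frac{H_\kappa}{(n-2)\mathcal E_k^2}\,\kappa^\flat\otimes\kappa^\flat+\frac{\rho}{2\mathcal E_k}\big(\iota_nF\otimes\kappa^\flat+\kappa^\flat\otimes\iota_nF\big).
\]
Quasi-umbilicity asks whether $\Pi-\lambda g_\scrT$ can be written as $\mu\,\vartheta\otimes\vartheta$ for some $\lambda$. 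Once the first term is taken as the $g_\scrT$ part, the question becomes whether the symmetric tensor
\[
B:=\frac{H_\kappa}{(n-2)\mathcal E_k^2}\,\kappa^\flat\otimes\kappa^\flat+\frac{\rho}{\mathcal E_k}\,\mathcal F
\]
has rank at most one.

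\emph{Sufficiency.} If $\rho=0$, then $B$ is a multiple of $\kappa^\flat\otimes\kappa^\flat$. Normalizing, $\vartheta:=\kappa^\flat/\sqrt{\kappa^2}$ is a unit one-form with $g_\scrT(\vartheta^\sharp,\vartheta^\sharp)=-1$. This gives $\Pi=\lambda g_\scrT+\mu\,\vartheta\otimes\vartheta$ with $\lambda=H_\kappa/(n-2)$ and $\mu=\kappa^2 H_\kappa/((n-2)\mathcal E_k^2)$. If instead $\iota_nF=c\,\kappa^\flat$, then $\mathcal F=c\,\kappa^\flat\otimes\kappa^\flat$ and the same argument applies, with the extra term $\rho c\,\kappa^2/\mathcal E_k$ added to $\mu$.

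\emph{Necessity.} Suppose $\Pi=\lambda' g_\scrT+\mu\,\vartheta\otimes\vartheta$ and $\rho\neq0$. Restricting to $D=\kappa^\perp$, the formula \eqref{Pi} gives $\Pi|_D=\lambda g_\scrT|_D$, with $\lambda=H_\kappa/(n-2)$. Hence
\[
(\lambda'-\lambda)\,g_\scrT|_D+\mu\,\vartheta|_D\otimes\vartheta|_D=0\quad\text{on }D.
\]
Since $\dim D=n-2\ge1$ and $g_\scrT|_D$ is positive definite, a multiple of the metric plus a rank $\le1$ tensor can vanish only in two ways. If $n-2\ge2$, we must have $\lambda'=\lambda$ and $\mu\,\vartheta|_D\otimes\vartheta|_D=0$. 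If $n-2=1$, rather than handling this directly I would compare the full tensors; this is the delicate point, discussed below.

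In the main case, $\mu=0$ or $\vartheta|_D=0$.
\begin{itemize}
\item If $\mu=0$, then $B=0$. Evaluating $B(\kappa,v)$ for $v\in D$ gives $-\tfrac{\rho\kappa^2}{2\mathcal E_k}\,\iota_nF(v)=0$, so $\iota_nF$ vanishes on $D$.
\item If $\vartheta|_D=0$, then $\vartheta\propto\kappa^\flat$. Again $B(\kappa,\cdot)|_D$ must vanish, because $\vartheta\otimes\vartheta(\kappa,v)=0$ for $v\in D$ while $(\lambda'-\lambda)g_\scrT(\kappa,v)=0$.
\end{itemize}
In both cases, \eqref{Pikappa} together with $\rho\neq0$ forces $\iota_nF|_D=0$, that is, $\iota_nF=c\,\kappa^\flat$.

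The main obstacle is the claim that $c$ is non vanishing, together with the low-dimensional case $n=3$. I would handle $n=3$ by noting that $B$ is a symmetric tensor on a $2$-dimensional Lorentzian space. The equation $\Pi-\lambda'g_\scrT=\mu\,\vartheta\otimes\vartheta$ says that the shape operator has a repeated eigenvalue with a rank-one defect. Writing $\iota_nF=a\,\kappa^\flat+b\,\nu^\flat$ in an adapted frame $\{\kappa,\nu\}$ and computing the determinant condition should force $b=0$ when $\rho\neq0$; this is a routine two-by-two computation I would check carefully. As for $c$ being non vanishing, I expect it is really a convention: at points where $\iota_nF=0$ the $\rho=0$ argument applies verbatim. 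I would therefore read the statement as requiring collinearity, with the zero case absorbed into the $\rho=0$ alternative.
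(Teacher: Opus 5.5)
Your sufficiency argument and your necessity argument for $n\ge 4$ are correct, and the latter is more careful than the paper's. The paper subtracts $\lambda g_\scrT$ and substitutes \eqref{Pi} directly, which tacitly identifies the $\lambda$ of the quasi-umbilical decomposition with $H_\kappa/(n-2)$. It then compares ranks, using that the left-hand side has image in $\mathrm{span}\{\kappa,(\iota_nF)^\sharp\}$. Your restriction to $D=\kappa^\perp$ is exactly what justifies $\lambda'=H_\kappa/(n-2)$ when $\dim D\ge 2$. After that, your evaluation on pairs $(\kappa,v)$ with $v\in D$ does the work of the rank count. You are also right about ``non vanishing'': if $\iota_nF\equiv 0$ and $\rho\neq 0$, then \eqref{Pi} already has quasi-umbilical form with $c=0$, so only collinearity can be proved.

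The genuine gap is the case $n=3$, and your proposed fix would fail. The identity $\Pi-\lambda' g_\scrT=\mu\,\vartheta\otimes\vartheta$ with $\vartheta$ unit does not mean a repeated eigenvalue. It means the $g_\scrT$-self-adjoint endomorphism associated with $\Pi$ is diagonalizable, with $\ker\vartheta$ inside one eigenspace and $\vartheta^\sharp$ an eigenvector, generally for a different eigenvalue. When $\dim\scrT=2$, $\ker\vartheta$ is a line, so this only asks for real diagonalizability with non-null eigenvectors. Concretely, take the orthonormal frame $e_0=\kappa/\sqrt{\kappa^2}$, $e_1=\nu$, and set $a=\iota_nF(\kappa)$, $b=\iota_nF(\nu)$. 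Then \eqref{Pi} gives
\[
\Pi_{00}=H_\kappa\Big(\frac{\kappa^2}{\mathcal E_k^2}-1\Big)-\frac{\rho a}{\mathcal E_k},\qquad
\Pi_{01}=-\frac{\rho\sqrt{\kappa^2}\,b}{2\mathcal E_k},\qquad
\Pi_{11}=H_\kappa .
\]
The equation $\det(\Pi-\lambda' g_\scrT)=0$ is a quadratic in $\lambda'$ with discriminant $(\Pi_{00}+\Pi_{11})^2-4\Pi_{01}^2$. Whenever this is positive, a real root $\lambda'$ makes $\Pi-\lambda' g_\scrT=\pm\theta\otimes\theta$ with $\theta$ non-null, so $\scrT$ is quasi-umbilical. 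The condition reads
\[
\big|H_\kappa\kappa^2/\mathcal E_k-\rho a\big|>|\rho|\sqrt{\kappa^2}\,|b| ,
\]
which is perfectly compatible with $b\neq 0$. So no $2\times 2$ computation can force $b=0$: for $n=3$ the necessity direction is false as stated.

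Such configurations do occur. By \eqref{Pi}, $\Pi(\kappa,\nu)=-\frac{\rho\kappa^2}{2\mathcal E_k}\,b$, so $b\neq0$ as soon as $\rho\neq0$ and $\Pi(\kappa,\nu)\neq 0$. For $K$-invariant $\scrT$ one has $\Pi(K,\nu)=-\tfrac12\,dK^\flat(\nu,n)$, which is non-zero wherever $K$ has non-zero twist.

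The paper's proof has the same blind spot, since it never justifies $\lambda=H_\kappa/(n-2)$. What your argument actually proves is the statement for $n\ge 4$, or for all $n$ if the quasi-umbilical decomposition is required to have $\lambda=H_\kappa/(n-2)$.
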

\begin{proof}
If $\iota_n F=c\,\kappa^\flat$, then the mixed term in the right-hand side of \eqref{Pi} becomes 
	$\frac12(\iota_n F\otimes\kappa^\flat+\kappa^\flat\otimes\iota_n F)=c\,\kappa^\flat\!\otimes\kappa^\flat$, hence
	\[
	\Pi=\lambda\,g_{\scrT}+\mu\,\kappa^\flat\!\otimes\kappa^\flat
	\quad\text{with}\quad
	\lambda=\frac{H_\kappa}{n-2},\quad
	\mu=\frac{H_\kappa}{n-2}\mathcal E_k^{-2}+\frac{\rho}{\mathcal E_k}c.
	\]
	Taking the unit $1$-form $\vartheta:=\kappa^\flat/\sqrt{|\kappa^2|}$ 
	this gives $\Pi=\lambda g_{\scrT}+\mu'\,\vartheta\otimes\vartheta$ with $\mu'=\mu\,|\kappa^2|$.
	
	Conversely, suppose $\Pi=\lambda g_{\scrT}+\mu\,\vartheta\otimes\vartheta$ with $\vartheta$ unit.
	Subtract $\lambda g_{\scrT}$ from both sides and using again \eqref{Pi} we get:
	\[
	\frac{H_\kappa}{n-2}\mathcal E_k^{-2}\kappa^\flat\!\otimes\kappa^\flat
		+\frac{\rho}{2\mathcal E_k}(\iota_n F\otimes\kappa^\flat+\kappa^\flat\otimes\iota_n F)	=\mu\,\vartheta\otimes\vartheta.
	\]
	  The endomorphism of $T\scrT$  associated with the right-hand side has rank $1$ while the one associated with the  left-hand side has image contained in
	$\mathrm{span}\{\kappa,(\iota_n F)^\sharp\}$, hence its rank is less or equal to $2$, and
	it is $1$ iff $\iota_n F$ is collinear with $\kappa^\flat$.
	
	The case $\rho=0$ is immediate as the mixed term in the right-hand side of \eqref{Pi} vanishes.
\end{proof}

\begin{remark}[Vanishing electromagnetic field and dependence on the energy level]\label{timelikegeo}
	Assume that the electromagnetic field vanishes, $F=0$. Then the Lorentz force equation reduces to the timelike geodesic equation, and  one has $\mathcal E_k=\varepsilon$. In particular, the $(0,\varepsilon)$--MPS condition means that \emph{unit} timelike geodesics (equivalently, massive particle worldlines with fixed mass normalization $g(\dot\gamma,\dot\gamma)=-1$, or more generally $g(\dot\gamma,\dot\gamma)=-m^2$) with fixed Killing energy $-g(K,\dot\gamma)=\varepsilon$ that start tangent to $\scrT$ remain entirely contained in $\scrT$.
Thus, from Theorem~\ref{thmumbMPS} we immediately obtain:
\begin{corollary}
	If $F=0$, then a timelike hypersurface $\scrT$ is a $(0,\varepsilon)$--MPS (with the mass normalization $g(\dot\gamma,\dot\gamma)=-1$, or $=-m^2$) if and only if $\scrT$ is quasi-umbilical  with distinguished direction given by the tangential projection $\kappa$ of $K$; more precisely,
	\[
	\Pi \;=\; \frac{H_\kappa}{n-2}\Bigl(g_\scrT + \mathcal E_k^{-2}\,\kappa^\flat\otimes\kappa^\flat\Bigr)
	\]
\end{corollary}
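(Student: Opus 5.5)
The corollary should follow from Theorem~\ref{thmumbMPS} specialized to $F=0$, with one rescaling step for the mass normalization and one unit-form normalization for quasi-umbilicity. I would work under the standing hypothesis of that theorem, $\mathcal E_k^2>\kappa^2$ on $\scrT$. With $F=0$ this reads $\varepsilon^2>\kappa^2$, since $\mathcal E_k=\varepsilon$.

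First, with $F=0$ the Lorentz force equation \eqref{eqLorentz} reduces to the geodesic equation. The energy \eqref{constmotion} becomes $\varepsilon=-g(K,\dot\gamma)$, which is the constant Killing energy of a geodesic. Hence $\mathcal E_k=\varepsilon$ is constant, and the tensor $\mathcal F$ in \eqref{HF} vanishes identically, because $\iota_nF=0$.

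By the equivalence $(1)\Leftrightarrow(3)$ of Theorem~\ref{thmumbMPS}, $\scrT$ is a $(0,\varepsilon)$--MPS if and only if
\[
\Pi=\frac{H_\kappa}{n-2}\,\H=\frac{H_\kappa}{n-2}\Bigl(g_\scrT+\varepsilon^{-2}\kappa^\flat\otimes\kappa^\flat\Bigr).
\]
This is exactly the displayed formula in the corollary.

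For the normalization $g(\dot\gamma,\dot\gamma)=-m^2$, I would rescale the parameter, $\gamma(\tau)\mapsto\gamma(\tau/m)$. Affine reparametrization preserves geodesics and changes the energy $\varepsilon$ to $\varepsilon/m$. So the $(-m^2)$-normalized MPS condition at energy $m\varepsilon$ is the unit condition at energy $\varepsilon$, and the same characterization holds once the energy is rescaled in this way.

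For the quasi-umbilical reformulation, set $\vartheta:=\kappa^\flat/\sqrt{\kappa^2}$. This is a unit one-form, since $\kappa$ is timelike. The formula then takes the form $\Pi=\lambda g_\scrT+\mu\,\vartheta\otimes\vartheta$ with $\lambda=H_\kappa/(n-2)$ and $\mu=\lambda\kappa^2/\varepsilon^2$, which is the $\rho=0$ case of the preceding proposition.

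For the converse, suppose $\scrT$ is quasi-umbilical with distinguished direction $\kappa$, in the specific form displayed in the corollary. Then reading $(3)\Rightarrow(1)$ of Theorem~\ref{thmumbMPS} gives the MPS property directly.

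The only delicate point is the hypothesis $\varepsilon^2>\kappa^2$. It is needed for $(1)\Rightarrow(3)$; see the remark on the degenerate case $\mathcal E_k^2=\kappa^2$. I would state it explicitly rather than silently assume it. Apart from that, the argument is a direct substitution into the earlier theorem.
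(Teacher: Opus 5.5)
Your proposal is correct and follows the paper, which obtains the corollary directly from Theorem~\ref{thmumbMPS} by setting $\rho=0$, so that $\mathcal F$ drops out and $\mathcal E_k=\varepsilon$. Three of your additions are sound refinements of points the paper leaves implicit: stating the hypothesis $\varepsilon^2>\kappa^2$ explicitly, handling the $-m^2$ normalization by affine rescaling of the energy, and restricting the converse to the precise displayed form of $\Pi$ rather than mere quasi-umbilicity along $\kappa$.
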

	It is worth stressing why the energy label $\varepsilon$ does not become vacuous in the timelike case: if $\gamma$ is a timelike geodesic and one performs an affine rescaling of the parameter, $\tau\mapsto a\tau+b$ ($a\neq 0$), then $\dot\gamma$ rescales by $a^{-1}$, so both the norm $g(\dot\gamma,\dot\gamma)$ and the Killing energy $-g(K,\dot\gamma)$ change accordingly. Thus, once the mass normalization is fixed (proper time, or $g(\dot\gamma,\dot\gamma)=-m^2$), $\varepsilon$ becomes an intrinsic constant of motion and cannot be changed by reparametrization. This is consistent with  \cite{BerCardAnd25} where the authors fix both the mass $m$ and the energy.	On the other hand, if one forgets the mass normalization (i.e. one works with unparametrized timelike geodesics, or equivalently allows arbitrary affine rescalings), then the numerical value of $\varepsilon$ is not an invariant label anymore: it can be adjusted by rescaling the affine parameter, at the cost of changing $g(\dot\gamma,\dot\gamma)$.
If one drops the mass normalization and requires instead that \emph{every} (unparametrized) timelike geodesic tangent to $T$ at some point remains contained in $\scrT$, then $\scrT$ must be totally geodesic, i.e. $\Pi\equiv 0$.
Indeed, fix $p\in \scrT$ and let $v\in T_p\scrT$ be timelike. If the ambient geodesic $\gamma$ with $\dot\gamma(0)=v$ stays in $\scrT$, then the normal component of $\nabla_{\dot\gamma}\dot\gamma$ vanishes identically, which in particular  gives $\Pi_p(v,v)=0$. Hence the quadratic form $Q_p(v):=\Pi_p(v,v)$ vanishes on the open cone of timelike vectors in $T_p\scrT$ and since $\Pi$ is symmetric bilinear,  $\Pi\equiv 0$ on $\scrT$.
\end{remark}

\subsection{MPS master equation}
In this subsection, we re-obtain in our setting \cite[Eq. (35)]{KBG2022} that is called there  {\em master equation}.
This equation  gives the expression of the kinetic energy $\mathcal E_k$ in \eqref{kineticpart} along a $(\rho,\varepsilon)$-MPS. 
Let us define
\[
\mathcal G:=\frac{\iota_n F(\kappa)}{\kappa^2}=\frac{F(n,\kappa)}{\kappa^2},
\qquad
\mathcal K:= -\,H_\kappa +(n\!-\!2)\,\frac{\Pi(\kappa,\kappa)}{\kappa^2}.
\]
\begin{proposition}\label{master}
	Let $\scrT$ be a $(\rho,\varepsilon)$–MPS. Then 
	\begin{equation}\label{eq:master-quadratic}
\mathcal K\,\mathcal E_k^2\ +\ (n-2)\,\rho\,\kappa^2\,\mathcal G\,\mathcal E_k\ +\ \kappa^2\,H_\kappa=0.
\end{equation}
In particular, if $\mathcal K\neq 0$ the two roots are
\[
\mathcal E_k=\frac{-(n-2)\rho\kappa^2\mathcal G\pm
	\sqrt{(n-2)^2\rho^2\kappa^4\mathcal G^2-4\mathcal K\kappa^2H_\kappa}}{2\mathcal K}.
\]
If $\mathcal K=0$, then \eqref{eq:master-quadratic} reduces to a linear equation in $\mathcal E_k$ and, if also $\mathcal G=0$, it reduces to $H_\kappa=0$.
\end{proposition}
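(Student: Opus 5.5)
The plan is to evaluate the characterization \eqref{Pi} of Theorem~\ref{thmumbMPS} on the pair $(\kappa,\kappa)$, and then clear denominators. We assume $\mathcal E_k^2>\kappa^2$, so that Theorem~\ref{thmumbMPS} applies; otherwise we use directly the identity obtained in its proof of $(2)\Rightarrow(3)$.

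First compute the two auxiliary tensors on $(\kappa,\kappa)$. Since $\kappa^\flat(\kappa)=g_\scrT(\kappa,\kappa)=-\kappa^2$, we get
\[
\H(\kappa,\kappa)=-\kappa^2+\frac{\kappa^4}{\mathcal E_k^2},\qquad \mathcal F(\kappa,\kappa)=\iota_nF(\kappa)\,\kappa^\flat(\kappa)=-\kappa^2\,\iota_nF(\kappa)=-\kappa^4\mathcal G .
\]
Here we used $\iota_nF(\kappa)=F(n,\kappa)=\kappa^2\mathcal G$. Substituting into \eqref{Pi} gives
\[
\Pi(\kappa,\kappa)=\frac{H_\kappa}{n-2}\Big(-\kappa^2+\frac{\kappa^4}{\mathcal E_k^2}\Big)-\frac{\rho}{\mathcal E_k}\kappa^4\mathcal G .
\]
This is exactly the formula derived at the end of the proof of $(2)\Rightarrow(3)$.

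Next, multiply this identity by $(n-2)\mathcal E_k^2/\kappa^2$, which is allowed since $\kappa^2>0$ and $\mathcal E_k\neq0$. This yields
\[
(n-2)\frac{\Pi(\kappa,\kappa)}{\kappa^2}\mathcal E_k^2=-H_\kappa\mathcal E_k^2+H_\kappa\kappa^2-(n-2)\rho\kappa^2\mathcal G\,\mathcal E_k .
\]
Move everything to the left and collect the $\mathcal E_k^2$ terms. Their coefficient is $-H_\kappa+(n-2)\Pi(\kappa,\kappa)/\kappa^2$, and this is precisely $\mathcal K$. The result is \eqref{eq:master-quadratic}.

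The remaining statements are elementary algebra. If $\mathcal K\neq0$, the quadratic formula gives the two roots. If $\mathcal K=0$, the equation becomes linear in $\mathcal E_k$. If in addition $\mathcal G=0$, it reduces to $\kappa^2H_\kappa=0$, hence $H_\kappa=0$ since $\kappa^2>0$.

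The only delicate point is the sign bookkeeping: $\kappa^2=-g_\scrT(\kappa,\kappa)$, and one must take care with the order of the arguments in $F(n,\kappa)$ versus $\iota_nF$. I would double-check these signs against the $(\kappa,\kappa)$ computation already done in the proof of Theorem~\ref{thmumbMPS}.
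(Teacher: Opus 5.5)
Your strategy, evaluating \eqref{Pi} on $(\kappa,\kappa)$ and clearing denominators, is the same as the paper's proof. The paper's detour through an admissible $u=\alpha\kappa+v$ adds nothing, since \eqref{id1} is just $\mathcal H(u,u)=0$. Your identity after multiplying by $(n-2)\mathcal E_k^2/\kappa^2$ is correct. The final step is wrong, however. Bringing $-H_\kappa\mathcal E_k^2+H_\kappa\kappa^2$ to the left gives $+H_\kappa\mathcal E_k^2-H_\kappa\kappa^2$. So the coefficient of $\mathcal E_k^2$ is $H_\kappa+(n-2)\Pi(\kappa,\kappa)/\kappa^2=\mathcal K+2H_\kappa$, not $\mathcal K$, and the constant term is $-\kappa^2H_\kappa$. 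What your computation actually yields is
\[
(\mathcal K+2H_\kappa)\,\mathcal E_k^2+(n-2)\,\rho\,\kappa^2\,\mathcal G\,\mathcal E_k-\kappa^2H_\kappa=0 .
\]
This differs from \eqref{eq:master-quadratic} by $2H_\kappa(\mathcal E_k^2-\kappa^2)$. Since $\mathcal E_k^2>\kappa^2$, the two coincide only when $H_\kappa=0$. Your conclusion ``the result is \eqref{eq:master-quadratic}'' is reached only through this sign slip.

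More careful bookkeeping will not fix this, because the displayed identity is an exact consequence of \eqref{Pi}: the printed master equation has the sign of $H_\kappa$ wrong. The paper's proof reaches it only through the substitution in \eqref{id2}, where $g_\scrT(v,v)=\alpha^2\kappa^2-1$ enters with a minus sign. The correct sign is the one in the proof of Theorem~\ref{thmumbMPS}, namely $\alpha^2\Pi(\kappa,\kappa)=-\frac{H_\kappa}{n-2}\,r^2-\rho\alpha\,\iota_nF(\kappa)$.

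A concrete test uses Schwarzschild with $f=1-2m/r$, $n=4$ and $\rho=0$. The cylinder $\scrT=\{r=r_0\}$, $r_0>3m$, is a $(0,\varepsilon)$-MPS for the circular-orbit energy $\varepsilon^2=f^2/(1-3m/r_0)$. Here $\kappa=\partial_t$, $\kappa^2=f$, $H_\kappa=2\sqrt f/r_0$ and $\Pi(\kappa,\kappa)=-\sqrt f\,m/r_0^2$. One gets $(\mathcal K+2H_\kappa)\varepsilon^2=\kappa^2H_\kappa=2f^{3/2}/r_0$, so the corrected equation holds. But $\mathcal K\varepsilon^2+\kappa^2H_\kappa=-4mf^{3/2}/\big(r_0^2(1-3m/r_0)\big)\neq0$.

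So your method proves the corrected statement, not the printed one. In the correction, every explicit $H_\kappa$ changes sign, both inside $\mathcal K$ and in the constant term, and the root formula changes accordingly.

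Separately, drop your fallback for $\mathcal E_k^2\le\kappa^2$. In that range the admissible velocities at a point form a single vector or the empty set, \eqref{Pi} is unavailable, and no master equation follows. The hypothesis $\mathcal E_k^2>\kappa^2$ simply has to be assumed.
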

\begin{proof}
Fix $p\in\scrT$ and an admissible vector  $u\in T_p\scrT$. Write $u=\alpha\,\kappa+v$ with $v\in D:=\kappa^\perp$. Recalling \eqref{HF} and \eqref{Pi},  we get
	\beq\label{Piuu}
	\Pi(u,u)=\frac{H_\kappa}{n-2}\big(g_\scrT (v,v)+\alpha^2(-\kappa^2+\mathcal E_k^{-2}\kappa^4)\big)
	-\frac{\rho\alpha\kappa^2}{\mathcal E_k}\iota_n F(v)
	-\frac{\rho\alpha^2\kappa^2}{\mathcal E_k}\iota_n F(\kappa).
	\eeq
From \eqref{Piu},  $\Pi(u,u)=-\rho\,\iota_n F(u)=-\rho(\alpha\,\iota_n F(\kappa)+\iota_n F(v))$ and from the first equation in \eqref{alpha}, $\frac{\alpha\kappa^2}{\mathcal E_k}=1$.
Replacing in \eqref{Piuu},  the terms proportional to $\iota_n F(v)$ and $\iota_n F(\kappa)$
cancel out, and we are left with the identity
\beq\label{id1}	\frac{H_\kappa}{n-2}\big(g_\scrT (v,v)+\alpha^2(-\kappa^2+\mathcal E_k^{-2}\kappa^4)\big)=0
\eeq
By \eqref{Pi},
\[
\alpha^2\,\Pi(\kappa,\kappa)=\alpha^2\Big[\frac{H_\kappa}{n-2}\mathcal H(\kappa,\kappa)+\frac{\rho}{\mathcal E_k}\mathcal F(\kappa,\kappa)\Big]
=\frac{H_\kappa}{n-2}\,\alpha^2(-\kappa^2+\mathcal E_k^{-2}\kappa^4)
\;-\;\frac{\rho\alpha^2\kappa^2}{\mathcal E_k}\,\iota_nF(\kappa)
\]
Therefore \eqref{id1} becomes:
\[
\frac{H_\kappa}{n-2}\,g_\scrT(v,v)\;+\;\alpha^2\,\Pi(\kappa,\kappa)\;+\;\rho\,\alpha\,\iota_nF(\kappa)=0.
\]
By the second equation in \eqref{alpha}, we then get
\beq\label{id2}
-\frac{H_\kappa}{n-2}\,(\alpha^2\kappa^2-1)\;+\;\alpha^2\,\Pi(\kappa,\kappa)\;+\;\rho\,\alpha\,\iota_nF(\kappa)=0,
\eeq
By the definition of $\mathcal K$,
\[
\frac{n-2}{\kappa^2}\,\Pi(\kappa,\kappa)\;=\;\mathcal K+H_\kappa.
\]
Multiplying \eqref{id2} by $(n-2)$ and substituting
$\displaystyle \alpha=\frac{\mathcal E_k}{\kappa^2}$ and 
$\iota_n F(\kappa)=\kappa^2\,\mathcal G$, we obtain
\[
\frac{\mathcal E_k^2}{\kappa^2}\,\mathcal K\;+(n-2)\,\rho\,\mathcal E_k\mathcal G\;+\;H_\kappa\;=\;0.
\]
Multiplying by $\kappa^2$ we get \eqref{eq:master-quadratic}.
\end{proof}	
\section{Reduction of the Lorentz force equation to a non-relativistic electromagnetic system}\label{KKJM}

Throughout this section we consider the Lorentz force dynamics as the Euler--Lagrange
equation of the functional
\begin{equation}\label{eq:functional_f}
	f(\gamma)=\int_{a}^{b}\Big(\tfrac12\,g(\dot\gamma,\dot\gamma)+\rho\,\mathcal A(\dot\gamma)\Big)\,ds,
	\qquad F=d\mathcal A,
\end{equation}
defined on $H^1$ curves, parametrized on a given interval $[a,b]$, with fixed endpoints.
The parameter $s$  is not arbitrary (the functional is not reparametrization invariant) and its critical points satisfy \eqref{eqLorentz} with respect to $s$.
Along any critical point $\gamma$ of \eqref{eq:functional_f} the quantity
\begin{equation}\label{defe}
	e:=\tfrac12\,g(\dot\gamma,\dot\gamma)
\end{equation}
is constant. Indeed,
\begin{equation}\label{causalchar}
\frac{d}{ds}g(\dot\gamma,\dot\gamma)=2\,g(\nabla_{\dot\gamma}\dot\gamma,\dot\gamma)
=-2\rho\,F(\dot\gamma,\dot\gamma)=0.
\end{equation}
If $\gamma$ is timelike ($e<0$), then  $s$ coincides with proper time if and only if
\begin{equation*}
	g(\dot\gamma,\dot\gamma)=-1\quad\Longleftrightarrow\quad e=-\tfrac12.
\end{equation*}

Let us consider now a standard stationary spacetime so that $M=\R\times S$ and  $g$ as in \eqref{1a}. Let us assume that the electromagnetic field $F$ admits a stationary electromagnetic potential one-form $\mathcal A$  (so that $\mathcal L_K \mathcal A=0$ and $F=d\mathcal A$). The  potential $\mathcal{A}$ does not depend on $t$ and it can be identified with a couple $(\phi, A)$, where $\phi$ and $A$ are a  function  and   a vector field on $S$ respectively, such that, for any $(\tau, v)$
\begin{equation*}
\mathcal{A} ( (\tau, v) ) =  
 - \beta \left(\phi -\hat{\omega}  (A) \right) \left(\tau  -\hat{\omega}  (v) \right)  +h_0 (A,v),
\end{equation*}
where $\hat \omega$ and $h_0$ are defined as in \eqref{opticaldata}.
For timelike solutions $\gamma = (t,x)$  the energy in \eqref{constmotion} becomes  
\begin{align*} 
\varepsilon & = -\,g(\partial_t,\dot\gamma )\;-\rho \mathcal A(\partial_t) \nonumber \\
&  =  \beta  \left(\dot t -\hat{\omega}  (\dot x) \right) + \rho \beta   \left(\phi -\hat{\omega}  (A) \right)   \nonumber \\ 
& = \beta \left(\dot t -\hat{\omega }  (\dot x+ \rho A ) + \rho \phi \right) .  
\end{align*}
Let us introduce the function 
\begin{equation}\label{chi}
	\chi:S\to \R,\quad \chi=\phi-\hat\omega(A),
\end{equation}	
so that 
\[\varepsilon=\beta\big(\dot t-\hat\omega(\dot x)+\rho\,\chi(x)\big).\]

We now state a generalization of a previous result for geodesics in \cite{ger} to solutions of \eqref{eqLorentz} parametrized with respect to $s$.
More precisely, the projection onto $S$ of such solutions satisfies the equation of a particle moving on $(S, h_0)$ under the action of a potential $V$ and a magnetic field $B$, which, for a fixed pair $(\rho, \varepsilon)$,  are given by
\begin{align}
V&:=-\frac{\big(\varepsilon-\rho\beta\,\chi\big)^2}{2\beta} \label{pot}\\
B & := d (\varepsilon \hat{\omega} + \rho\mathcal A_S) \quad  \label{field}
\end{align}
where  $\mathcal A_S:= A^{\flat,h_0}$. Remarkably, the  total  energy of  the  particle  coincides with  the  quantity $\tfrac{1}{2} g(\dot{\gamma}, \dot{\gamma})$  (see \eqref{en}). This allows us to consider solutions parametrized with proper time and fixed charge-to-mass ratio.

\begin{theorem}\label{lorentz-lag}
	Let $(M= \R \times S,g)$ be a stationary spacetime, with $g$  as in \eqref{1a},
	and let $\rho, \varepsilon \in \R$. 
\begin{enumerate}
	\def\labelenumi{(\alph{enumi})}
	\item If $\gamma(s) = (t(s),x(s))$ is a solution of~\eqref{eqLorentz} with parameter $s$ such that 
	\begin{equation} \label{const}
		 \varepsilon=\beta\big(\dot t-\hat\omega(\dot x)+\rho\,\chi(x)\big),
		\qquad 
		\frac{1}{2} g(\dot{\gamma}, \dot{\gamma}) = e,
	\end{equation}
	then $x$ satisfies
	\begin{align}
		& \nabla^{h_0}_{\dot{x}} \dot{x} + \nabla^{h_0} V(x)  =-  \big( \iota_{\dot{x}} B \big)^{\sharp, h_0}, \label{sys} \\
		& \frac{1}{2} h_0(\dot{x}, \dot{x}) + V(x) = e, \label{en}
	\end{align}
	where $V$ and $B$ are defined in \eqref{pot} and \eqref{field}, $\nabla^{h_0}$ is the gradient of $V$ with respect to $h_0$, 
	$\iota_{\dot{x}} B := B(\dot{x}, \cdot)$  and ${}^{\sharp, h_0}$ denotes the musical isomorphism with respect to $h_0$.
	
	\item Conversely,  if $x(s)$ is a solution of \eqref{sys}, \eqref{en}, 
	and $t$ solves
	\begin{equation} \label{en1}
			\dot t=\frac{\varepsilon}{\beta}+\hat\omega(\dot x)-\rho\,\chi(x),
	\end{equation}
	then $\gamma(s) = (t(s), x(s))$ is a solution of~\eqref{eqLorentz} such that \eqref{defe} holds.
\end{enumerate}
\end{theorem}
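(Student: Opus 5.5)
The plan is a Lagrangian (Routh-type) reduction of the functional \eqref{eq:functional_f}. In the splitting $M=\R\times S$ with $g$ as in \eqref{1a}, write the Lagrangian as
\[
L(x,\dot t,\dot x)=-\tfrac{\beta}{2}\big(\dot t-\hat\omega(\dot x)\big)^2+\tfrac12 h_0(\dot x,\dot x)-\rho\beta\chi(x)\big(\dot t-\hat\omega(\dot x)\big)+\rho\,h_0(A,\dot x).
\]
Here I have used $\mathcal A((\tau,v))=-\beta\chi(\tau-\hat\omega(v))+h_0(A,v)$. Since $\mathcal A$ is gauge-fixed and stationary, critical points of $f$ are exactly the solutions of \eqref{eqLorentz} in the parameter $s$, and it suffices to work with the Euler--Lagrange equations of $L$.

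The variable $t$ is cyclic, so $\partial L/\partial\dot t=-\beta(\dot t-\hat\omega(\dot x)+\rho\chi)=-\varepsilon$ is constant; this is exactly the first relation in \eqref{const}, and it is equivalent to \eqref{en1}. Following Routh, I would then form $R:=L+\varepsilon\dot t$ and substitute $\dot t-\hat\omega(\dot x)=\varepsilon/\beta-\rho\chi$. Writing $w:=\varepsilon/\beta-\rho\chi=(\varepsilon-\rho\beta\chi)/\beta$, a direct computation gives, up to a total derivative,
\[
R=\tfrac12 h_0(\dot x,\dot x)+\big(\varepsilon\hat\omega+\rho\mathcal A_S\big)(\dot x)+\tfrac{(\varepsilon-\rho\beta\chi)^2}{2\beta}.
\]
The constant-in-$\dot x$ part $-\tfrac\beta2 w^2-\rho\beta\chi w+\varepsilon w$ reduces to $\tfrac{\beta}{2}w^2$ after using $\varepsilon=\beta(w+\rho\chi)$. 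This is the classical Lagrangian $\tfrac12h_0(\dot x,\dot x)-V+\theta(\dot x)$ with $\theta=\varepsilon\hat\omega+\rho\mathcal A_S$ and $V$ as in \eqref{pot}. The standard Routh reduction theorem, which I would reprove directly, says that the $x$-equations of $L$ restricted to the level set $\partial_{\dot t}L=-\varepsilon$ coincide with the Euler--Lagrange equations of $R$. To reprove it, compute $\partial_{\dot x}L$ and $\partial_x L$ at fixed $(t,\dot t)$, then eliminate $\dot t$. The chain-rule terms coming from $\dot t=\dot t(x,\dot x)$ cancel, precisely because they are multiplied by $\partial_{\dot t}L+\varepsilon=0$. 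The Euler--Lagrange equations of $R$ are $\nabla^{h_0}_{\dot x}\dot x+\nabla^{h_0}V=-(\iota_{\dot x}d\theta)^{\sharp,h_0}$, which is \eqref{sys} since $B=d\theta$. I would double-check the sign convention $\iota_{\dot x}B=B(\dot x,\cdot)$ against the variation of $\int\theta(\dot x)$, whose contribution is $d\theta(\cdot,\dot x)=-\iota_{\dot x}d\theta$.

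The energy identity \eqref{en} follows by substitution: $\tfrac12 g(\dot\gamma,\dot\gamma)=-\tfrac\beta2 w^2+\tfrac12 h_0(\dot x,\dot x)=\tfrac12 h_0(\dot x,\dot x)+V$, and this equals $e$.

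For the converse (b), let $x$ solve \eqref{sys} and define $t$ by \eqref{en1}. The $t$-Euler--Lagrange equation $\frac{d}{ds}\partial_{\dot t}L=0$ then holds trivially, since $\partial_{\dot t}L\equiv-\varepsilon$. The $x$-equations hold by the same Routh identity read backwards, so $\gamma$ is a critical point of $f$ and hence solves \eqref{eqLorentz}. Finally, $\tfrac12 g(\dot\gamma,\dot\gamma)=\tfrac12h_0(\dot x,\dot x)+V=e$ by \eqref{en}, which gives \eqref{defe}.

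The main obstacle is the bookkeeping in the Routh step. One has to verify that the cross terms $\beta w\,\hat\omega(\dot x)$ and $\rho\beta\chi\,\hat\omega(\dot x)$ coming from $L$ combine into exactly $\varepsilon\hat\omega(\dot x)$, and that no leftover $\rho\beta\chi\hat\omega$ term survives. I would first check this on the $1$-form part $\partial_{\dot x}L=h_0(\dot x,\cdot)+\beta w\hat\omega+\rho\beta\chi\hat\omega+\rho\mathcal A_S=h_0(\dot x,\cdot)+\varepsilon\hat\omega+\rho\mathcal A_S$. This confirms the magnetic potential $\varepsilon\hat\omega+\rho\mathcal A_S$ and hence the field $B$ in \eqref{field}.
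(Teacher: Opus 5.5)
Your proposal is correct and is essentially the paper's argument. The paper also eliminates the cyclic variable $t$ through the conserved momentum $\beta(u+\rho\chi)=\varepsilon$: it varies $x$ alone with $t$ fixed and recognizes $\tfrac12u^2\,d\beta+\rho u\,d(\beta\chi)=dV$ on that level set, which is exactly what your Routh identity $\partial_xR=\partial_xL$, $\partial_{\dot x}R=\partial_{\dot x}L$ encodes. Writing the Routhian $R=\tfrac12h_0(\dot x,\dot x)+(\varepsilon\hat\omega+\rho\mathcal A_S)(\dot x)-V$ explicitly (exactly, with no total derivative needed) is a clean way to organize the same computation, and it makes the converse (b) transparent, where the paper only says ``reversing the computations''.
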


\begin{proof}
	Set 
	\[
	u:=\dot t-\hat\omega(\dot x).
	\]
	so that 
the integrand of $f$ is
	\begin{equation*}
		\frac12\,h_0(\dot x,\dot x)-\frac{\beta}{2}\,u^2-\rho\beta\chi\,u+\rho\,\mathcal A_S(\dot x).
	\end{equation*}
	Let $(t_\epsilon,x_\epsilon)$ be a variation of $\gamma$ with fixed endpoints and denote
	$\tau:=\partial_\epsilon t_\epsilon|_{\epsilon=0}$, $v:=\partial_\epsilon x_\epsilon|_{\epsilon=0}$.
	As $\gamma$ must be a critical point of $f$, taking a variation w.r.t. $x$ only and using the first equation in \eqref{const}, i.e. $\beta(u+\rho\chi)=\varepsilon$ (constant),  we get  (after some integration by parts):
		\begin{equation}\label{weak}
	f'(\gamma)[(0,v)]	=-\int_a^b\Big(
		h_0(\nabla^{h_0}_{\dot x}\dot x,v)
		+\frac12 u^2\,d\beta(v)
		+\rho u\,d(\beta\chi)(v)
		-\varepsilon\,d\hat\omega(v,\dot x)
		-\rho\,d\mathcal A_S(v,\dot x)
		\Big)\,ds=0.
	\end{equation}
From \eqref{pot} we get 
\[
dV
=\frac{(\varepsilon-\rho\beta\chi)^2}{2\beta^{2}}\,d\beta
+\rho\,\frac{\varepsilon-\rho\beta\chi}{\beta}\,d(\beta\chi),
\]
and since $u=\varepsilon/\beta-\rho\chi$, along $\gamma$ we get:
\[
dV=\frac{u^{2}}{2}\,d\beta+\rho\,u\,d(\beta\chi).
\]
Therefore, along $\gamma$ we have   $h_0(\nabla^{h_0}V,v)=\frac12 u^2\,d\beta(v)+\rho u\,d(\beta\chi)(v)$ and,
	as $v$ is arbitrary,  \eqref{weak} yields
	\[
	\nabla^{h_0}_{\dot x}\dot x+\nabla^{h_0}V(x)
	=-(\iota_{\dot x}B)^{\sharp,h_0},
	\]
	which is exactly \eqref{sys}. The remaining equation \eqref{en}, trivially follows from the second equation in \eqref{const}.
	
	For the converse, assume $x$ solves \eqref{sys} and \eqref{en}, and let $t$ solve \eqref{en1}. Then
	$\varepsilon=\beta(\dot t-\hat\omega(\dot x)+\rho\chi)$, i.e. both equations in  \eqref{const} hold.
	Moreover, reversing the computations above shows that \eqref{sys} is equivalent to
	\eqref{weak}, hence both the partial differentials  of $f$ vanish and then  $\gamma$ is a critical point of $f$, hence
	  it solves \eqref{eqLorentz},
	and \eqref{defe} follows from \eqref{en}.
\end{proof}

\subsection{Massive particle surfaces as totally geodesic surfaces}

According to Definition~\ref{rhoepsMPS}, we are interested in solutions of \eqref{eqLorentz} having energy $\varepsilon$ and parametrized by proper time. 
This means that we must set $e = -\tfrac{1}{2}$ in Theorem~\ref{lorentz-lag}.

Moreover, we note that \eqref{sys} represents the equation of motion of a particle $x$ moving under the action of the potential $V$ in \eqref{pot} and the magnetic field having potential  one-form 
\begin{equation}\label{Omega}
	\Omega:=\varepsilon \hat{\omega} + \rho \mathcal A_S.
\end{equation}	
We know that, under suitable assumptions on $V$ and $\Omega$, fixed-energy solutions of \eqref{sys} are pregeodesics with respect to a Jacobi--Randers metric (see Appendix~\ref{app} for details).
More precisely, for a fixed pair $(\rho, \varepsilon)$, we consider the function $F$  defined by 
\begin{equation} \label{finlerjac}
F(x, y) = \sqrt{\left(-1 -2 V(x)\right) h_0|_x(y, y)} + \Omega_x(y) 
\end{equation}
 Taking into account the definition of $V$ in \eqref{pot}, the conditions \eqref{fincond} with  $e = -\tfrac{1}{2}$ become
\begin{equation} \label{hyp}
\frac{\big(\varepsilon-\rho\beta\,\chi\big)^2}{\beta} > 1 \;\mbox{on $S$} \quad\text{and}\quad
 \sup_{v \in T_xS \setminus \{0\}} \frac{|\Omega(v)|}{\sqrt{h_J|_x(v, v)}} < 1\; \mbox{for any $x \in S$},
\end{equation}
where $h_J$ is the Jacobi metric defined by $h_J =  (-1 - 2V ) h_0$. 
In the following, given a Riemannian metric $h$ on $S$, and given a one-form $\eta$ on $S$ we will denote by  $\|\eta\|_h(x)$ the operator norm of $\eta_x$ with respect to $h_x$, $x\in S$; when there is no possibility of confusion, we will remove the evaluation at the point $x\in S$.

More generally, we can give the following definition
\begin{definition}[Randers domain]\label{Randersdomain}
	Fix $(e,\rho,\varepsilon)\in\R^3$. Let $V$ and $\mathcal A_S$ be as in \eqref{pot} and \eqref{field} and set
	\begin{equation}\label{jac}
	 h_e:=2(e-V)\,h_0 .
	\end{equation}
	The \emph{Randers domain}  of 
	\begin{equation}\label{finjac1}
	F_e(x,y):=\sqrt{h_e|_x(y,y)}+\Omega_x(y).
	\end{equation}
	is the open set
	\[
	\mathcal D_{e,\rho,\varepsilon}:=\Big\{x\in S:\ e>V(x)\ \text{and}\ \|\Omega\|_{h_e}(x)<1\Big\}.
	\]
	where $F_e$ becomes  a  Finsler metric.

In particular, in the proper-time case $e=-\tfrac12$ we write
\[
\mathcal D_{\rho,\varepsilon}:=\mathcal D_{-1/2,\rho,\varepsilon},
\]
and $F=F_{-1/2}$ is precisely the function in \eqref{finlerjac} with domain $T\mathcal D_{\rho,\varepsilon}$.
\end{definition}

\begin{remark} Notice that \eqref{hyp} is equivalent to $\mathcal D_{\rho,\varepsilon}=S$.
	The global assumption \eqref{hyp} requires to control  the quantities there  on the whole  manifold $S$.
	For the geometric characterization of $(\rho,\varepsilon)$--MPS, however, it is enough to require the Randers condition only near $S_0$.	
	More precisely, the defining inequalities of $\mathcal D_{\rho,\varepsilon}$ are strict and depend continuously on $x$.
	Therefore, if they hold on $S_0$, then by continuity there exists an open neighborhood $U\supset S_0$ such that
	$U\subset\mathcal D_{\rho,\varepsilon}$, i.e. $F$ is a Randers metric on $U$.

	We notice that the two pointwise conditions in \eqref{hyp} can be compressed into a single strict inequality:
	\begin{equation*}
		1+2V+\|\Omega\|_{h_0}^2<0.
	\end{equation*}
	Indeed, $\|\Omega\|_{h_J}(x)<1$ is equivalent to $\|\Omega\|_{h_0}^2(x)<(-1-2V(x))$, which in particular forces $(-1-2V(x))>0$.
	Hence the Randers domain can be written as the open set
	\begin{equation}\label{Phi}
	\mathcal D_{\rho,\varepsilon}=\Big\{x\in S:\ \Phi_{\rho,\varepsilon}(x)<0\Big\},
	\qquad
	\Phi_{\rho,\varepsilon}(x):=1+2V(x)+\|\Omega\|_{h_0}^2(x).
	\end{equation}
	\end{remark}

Theorem \ref{jacprinc} can be restated as follows.

\begin{theorem}\label{lorjacprin}
	Fix $(\rho,\varepsilon)\in\R^2$ and set $e=-\tfrac12$.
	Let $\mathcal D_{\rho,\varepsilon}$ be the Randers domain in Definition~\ref{Randersdomain}, and let
	$F$ be given by \eqref{finlerjac} on $T\mathcal D_{\rho,\varepsilon}$.
	
	If $x:I\to\mathcal D_{\rho,\varepsilon}$ is a geodesic of the Finsler manifold $(\mathcal D_{\rho,\varepsilon},F)$, then
	there exists a reparametrization of $x$ which solves \eqref{sys} and has total energy \eqref{en} equal to $-\tfrac12$.
	Conversely, every solution $x:I\to\mathcal D_{\rho,\varepsilon}$ of \eqref{sys} with total energy $-\tfrac12$
	is a pregeodesic of $(\mathcal D_{\rho,\varepsilon},F)$ (i.e.,  it admits a reparametrization which is a geodesic of $F$).
\end{theorem}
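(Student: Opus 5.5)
Theorem~\ref{lorjacprin} is the specialization to $e=-\tfrac12$ of the general Jacobi--Randers principle, Theorem~\ref{jacprinc} in the Appendix. That principle concerns the mechanical Lagrangian $L(x,y)=\tfrac12 h_0(y,y)+\Omega(y)-V(x)$, whose Euler--Lagrange equation is exactly \eqref{sys} with $B=d\Omega$ and $\Omega$ as in \eqref{Omega}. So the plan is to check that the hypotheses of Theorem~\ref{jacprinc} hold on the open set $\mathcal D_{\rho,\varepsilon}$ and then read off the conclusion. Since $\mathcal D_{\rho,\varepsilon}$ is open and the statement only concerns curves contained in it, I will work with $S$ replaced by $\mathcal D_{\rho,\varepsilon}$. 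On this set the two conditions \eqref{fincond} with $e=-\tfrac12$ are precisely $-1-2V>0$ and $\|\Omega\|_{h_J}<1$, which is the definition of the Randers domain. Hence $F$ in \eqref{finlerjac} is a genuine Randers metric there.

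If one instead wants a direct argument, independent of the Appendix, I would follow the Maupertuis--Jacobi scheme in the spirit of \cite{ben84}.

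\emph{Converse direction.} Take a solution $x$ of \eqref{sys} with energy $-\tfrac12$. By \eqref{en}, $h_0(\dot x,\dot x)=-1-2V$, so $\sqrt{h_J(\dot x,\dot x)}=(-1-2V)=h_0(\dot x,\dot x)$. Consider a fixed-endpoint variation $x_\epsilon$ of $x$. By adding a time reparametrization, we may assume each varied curve has energy $-\tfrac12$. The reparametrization is chosen as follows:
\[
\frac{d\sigma}{ds}=\sqrt{\frac{h_0(x_\epsilon',x_\epsilon')}{-1-2V(x_\epsilon)}},
\]
so that $\ell_F(x_\epsilon)=\int\big(\tfrac12h_0+\Omega-V\big)\,ds+\tfrac12(\text{length of time})$. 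Equivalently, I would use the standard identity $F(x,y)\le \tfrac12 h_0(y,y)+\tfrac12(-1-2V)+\Omega(y)$, with equality iff $h_0(y,y)=-1-2V$. Criticality of the Lagrangian action with free endpoint time and energy $-\tfrac12$ then gives criticality of $\ell_F$, hence $x$ is a pregeodesic of $F$.

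\emph{Direct direction.} Conversely, take an $F$-geodesic $x$. Reparametrize it so that $h_0(\dot x,\dot x)=-1-2V(x)$. This is possible because $\dot x\neq0$ and $-1-2V>0$ on $\mathcal D_{\rho,\varepsilon}$. Then compute the Euler--Lagrange equation of $\ell_F$ in this parametrization. The term $d\Omega$ produces $-(\iota_{\dot x}B)^{\sharp,h_0}$. The term $\sqrt{-1-2V}\,|\dot x|_{h_0}$, after dividing by $|\dot x|_{h_0}=\sqrt{-1-2V}$, produces $\nabla^{h_0}_{\dot x}\dot x+\nabla^{h_0}V$. The derivative of the normalizing factor vanishes thanks to the chosen parametrization. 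The result is \eqref{sys}, and \eqref{en} holds by construction.

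The main obstacle is the bookkeeping in this last computation. One must verify that the terms involving $\frac{d}{ds}\log|\dot x|_{h_0}$ combine with $d(-1-2V)$ to give exactly the gradient $\nabla^{h_0}V$, with the correct factor. Since Theorem~\ref{jacprinc} already contains this computation, I would rely on it and only verify the match of hypotheses and the identification $h_e=h_J$ at $e=-\tfrac12$.
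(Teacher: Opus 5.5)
Your plan matches the paper's. There, Theorem~\ref{lorjacprin} is presented simply as a restatement of Theorem~\ref{jacprinc}, with $h=h_0$, $e=-\tfrac12$, $V$ as in \eqref{pot}, magnetic potential $\Omega$ as in \eqref{Omega}, and the open set $\mathcal D_{\rho,\varepsilon}$ in place of $M$. On that set \eqref{fincond} is exactly the defining condition of the Randers domain, and $h_e=h_J$. Your optional direct sketch has one harmless sign slip: on curves with $h_0(\dot x,\dot x)=-1-2V$ one has $F(x,\dot x)=\tfrac12 h_0(\dot x,\dot x)+\Omega(\dot x)-V-\tfrac12$. So the time term enters with a minus sign, as your AM--GM inequality correctly shows.
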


We can now state the main theorem of this section, providing a characterization of massive particle surfaces.

\begin{theorem}[Characterization of massive particle surfaces]\label{mpschar}
	Let $(M=\R\times S,g)$ be a stationary spacetime with $g$ as in \eqref{1a}, and let
	$\scrT=\R\times S_0$ be a $\partial_t$--invariant hypersurface in $M$.
	Fix $(\rho,\varepsilon)\in\R^2$ and assume that $S_0\subset \mathcal D_{\rho,\varepsilon}$.
	
	Then the following statements are equivalent:
	\begin{enumerate}
		\item $\scrT=\R\times S_0$ is a $(\rho,\varepsilon)$--MPS (Definition~\ref{rhoepsMPS});
		\item $S_0$ is totally geodesic in the Finsler manifold $(\mathcal D_{\rho,\varepsilon},F)$,
		where $F$ is \eqref{finlerjac} restricted to $T\mathcal D_{\rho,\varepsilon}$.
	\end{enumerate}
\end{theorem}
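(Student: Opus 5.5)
The plan is to copy the proof of Theorem~\ref{thm:photonumb}, replacing the Fermat principle by the combination of Theorem~\ref{lorentz-lag} (with $e=-\tfrac12$) and Theorem~\ref{lorjacprin}. The first task is to set up a dictionary between admissible data on $\scrT$ and initial data on $S_0$. Let $p=(t_0,x_0)\in\scrT$ and let $u=(\dot t,y)\in T_p\scrT$ be a unit timelike vector with energy $\varepsilon$. Then $y\in T_{x_0}S_0$, and \eqref{const} together with \eqref{1a} gives
\[
h_0(y,y)=-1+\beta\big(\dot t-\hat\omega(y)\big)^2=-1+\frac{(\varepsilon-\rho\beta\chi)^2}{\beta}=-1-2V(x_0).
\]
This is exactly \eqref{en} with $e=-\tfrac12$. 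Conversely, take $y\in T_{x_0}S_0$ with $h_0(y,y)=-1-2V(x_0)$, which is a positive quantity on $\mathcal D_{\rho,\varepsilon}$. Then $\dot t$ defined by \eqref{en1} produces an admissible $u=(\dot t,y)\in T_p\scrT$, because $\scrT=\R\times S_0$ contains the $\partial_t$ direction. So admissible vectors at $p$ correspond bijectively to the $h_0$-sphere of radius $\sqrt{-1-2V(x_0)}$ in $T_{x_0}S_0$. Every direction in $T_{x_0}S_0$, including the zero vector in the degenerate sense, is thereby realized. I would note explicitly that $y\neq 0$ is fine here, since for a Finsler geodesic only the direction matters.

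For $(1)\Rightarrow(2)$, take $x_0\in S_0$ and a nonzero $w\in T_{x_0}S_0$. Rescale $w$ to the sphere above to get $y$, build the admissible $u$, and let $\gamma=(t,x)$ be the solution of \eqref{eqLorentz}. By (1), $\gamma$ stays in $\scrT$, so $x$ stays in $S_0$. By Theorem~\ref{lorentz-lag}(a), $x$ solves \eqref{sys}–\eqref{en} with $e=-\tfrac12$. By Theorem~\ref{lorjacprin}, $x$ is a pregeodesic of $F$ with initial direction $w$. Hence the $F$-geodesic with initial velocity $w$ lies in $S_0$ on a neighborhood of $0$. By uniqueness and a continuation argument it lies in $S_0$ for as long as it stays in $\mathcal D_{\rho,\varepsilon}$. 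This is the meaning of total geodesicity used in Theorem~\ref{thm:photonumb}: geodesics of $F$ tangent to $S_0$ stay in $S_0$, equivalently $F_0$-geodesics are $F$-geodesics.

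For $(2)\Rightarrow(1)$, take $p$ and $u$ admissible and write $u=(\dot t,y)$ with $y\in TS_0$. Let $\gamma=(t,x)$ be the corresponding solution. By Theorem~\ref{lorentz-lag}(a) and Theorem~\ref{lorjacprin}, $x$ is a pregeodesic of $F$ tangent to $S_0$ at $s=0$. By (2) its trace stays in $S_0$, so $\gamma$ stays in $\R\times S_0$. An alternative route avoids global traces: let $\sigma$ be the $F$-geodesic in $S_0$ from (2), reparametrize it via Theorem~\ref{lorjacprin} to a solution of \eqref{sys} with energy $-\tfrac12$ and initial velocity $y$, and lift it through \eqref{en1} using Theorem~\ref{lorentz-lag}(b). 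The lift is a solution of \eqref{eqLorentz} with the same initial data, so it coincides with $\gamma$ by uniqueness.

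The main obstacle is matching initial velocities in the reparametrization of Theorem~\ref{lorjacprin}. The reparametrization must send the $F$-geodesic with velocity $w$ to the solution of \eqref{sys} with velocity exactly $y$, and this has to be checked against the explicit reparametrization in the Appendix. I expect it to hold because the reparametrization is a positive rescaling fixed by the energy constraint \eqref{en}. A second point is that the curves must stay inside $\mathcal D_{\rho,\varepsilon}$ for the correspondence to apply. Since $S_0\subset\mathcal D_{\rho,\varepsilon}$ and the curves stay in $S_0$, I would run a continuity/open–closed argument on the maximal interval where $x$ remains in $S_0$ to close this gap.
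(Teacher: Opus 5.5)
Your proposal is correct and is essentially the paper's proof: both directions pass between admissible proper-time Lorentz solutions in $\scrT$ and $F$-geodesics tangent to $S_0$ via Theorem~\ref{lorentz-lag} with $e=-\tfrac12$ and Theorem~\ref{lorjacprin}, and your main route for $(2)\Rightarrow(1)$ coincides with the paper's. The only variation is in $(1)\Rightarrow(2)$, where the paper uses the lifting pattern of your alternative route: it reparametrizes the $F$-geodesic, lifts it by Theorem~\ref{lorentz-lag}(b), and applies (1) to the lift. Since the lift of the whole geodesic automatically has admissible initial velocity, this makes your explicit dictionary with the $h_0$-sphere and your continuation step unnecessary.
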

\begin{proof}
Assume that $\scrT = \R \times S_0$ is a $(\rho, \varepsilon)$-MPS. For any  $x_0 \in S_0$, $v \in T_{x_0} S_0$, let     $x$ be the  geodesic of $F$ starting at $x_0$ with tangent vector $v$. By Theorem \ref{lorjacprin}, there exists a reparametrization $\tilde{x}$ of $x$  solving \eqref{sys} and having total energy equal to $-\tfrac{1}{2}$. If $t$ is any solution of \eqref{en1}, then, by Theorem \ref{lorentz-lag}-(b),  $\gamma= (t, \tilde{x})$ is a solution of \eqref{eqLorentz}, parametrized by proper time,  whose initial point lies on  $\scrT$ and initial velocity lies on $\R \times T_{x_0}S_0$.  Thus $\gamma $ is contained in $\scrT$ and consequently $\tilde{x}$ (and hence $x$) is contained in $S_0$, which is therefore totally geodesic with respect to $F$.

Assume now that $S_0$ is totally geodesic with respect to the metric $F$. For any $p_0 = (t_0, x_0) \in \scrT$ and any $u_0 \in T_{p_0}(\R \times S_0) = \R \times T_{x_0} S_0$ such that $g(u_0,u_0)=-1$ and $-g (\partial_t, u_0)- \rho \mathcal{A}_{p_0}(\partial_t) = \varepsilon$, let $\gamma= (t, x)$ be the solution of \eqref{eqLorentz} starting at $p_0$ with tangent vector $u_0$.  By Theorem \ref{lorentz-lag}-(a), $x$ solves \eqref{sys} and has total energy equal to $-\tfrac{1}{2}$.  Hence, by Theorem \ref{lorjacprin}, there exists a reparametrization $\tilde x$ of $x$  which is a geodesic of $F$. As its initial point lies on    $S_0$ and its initial velocity lies on $ T_{x_0}S_0$,  $\tilde{x}$ (and therefore $x$) remains entirely  contained in $S_0$ and $\gamma$   is contained in $\scrT$. Therefore $\scrT $ is a $(\rho, \varepsilon)$-MPS.

\end{proof}

\begin{remark}\label{unified}
A special case of Theorem \ref{lorentz-lag} concerns null geodesics, obtained by setting 
$e = 0$ and $\mathcal{A} = 0$. Hence, it can be used to investigate photon surfaces 
(see Definition \ref{photonsurface}). A null geodesic $\gamma = (t,x)$ has  energy equal to 
$ \varepsilon   = -\,g(\partial_t, \dot\gamma ) 
= \beta \left(\dot t -\hat{\omega }  (\dot x ) \right)$, where $\varepsilon>0$ (respectively $\varepsilon<0$) if and only if $\gamma$ is future-pointing (respectively past-pointing). Its projection $x$ on $S$ is a solution of the system \eqref{sys}, where (by  \eqref{pot}, \eqref{field}) the potential and the magnetic field are
$$ V=- \frac{\varepsilon^2}{2 \beta}, \quad B =d(\varepsilon \hat{\omega}).$$
The associated Jacobi metric is $\tfrac{\varepsilon^2}{\beta}h_0= \varepsilon^2\hat{h}$, thus  the conditions in \eqref{fincond} hold and the solutions of \eqref{sys} with vanishing  total energy  \eqref{en} are pregeodesics of the Randers metric 
$$F(x, y) = |\varepsilon|  \sqrt{\hat{h}_x (y,y)}+  \varepsilon  \hat{\omega}_x(y)  
$$
which equals  $\varepsilon F^+$ if $\varepsilon >0$ and $-\varepsilon F^{-}$ if $\varepsilon<0$, where $F^{\pm}$ are the Fermat metrics as in \eqref{Fermat}. 

Since a null, future-pointing (respectively, past--pointing) geodesic can be reparametrized so that $\varepsilon = 1$ (respectively, $\varepsilon = -1$),  the analogue of Theorem \ref{mpschar} for null geodesics leads to the following conclusion: 
a $\partial_t$-invariant hypersurface $\scrT = \mathbb{R} \times S_0 \subset M$ is a photon surface if and only if $S_0$ is totally geodesic with respect to both $F^{\pm}$, as already shown in Theorem \ref{thm:photonumb}.
\end{remark}

\subsection{Solutions to the Lorentz force equation connecting a point to a line}

A further consequence of Theorems \ref{lorentz-lag} and \ref{jacprinc} concerns the existence of solutions $\gamma$ of \eqref{eqLorentz}, with a fixed charge-to-mass ratio $\rho$, in a standard stationary spacetime. We can consider two types of solutions: the ones  connecting a point $(t_0, p_0) \in M$ to a line $l :\R \to  M$, 
$u  \mapsto (u, p_1)$, $p_1\in S$ and the ones having  a periodic  component $x$ in $S$.  The first  type of solutions for  the geodesic equation has been widely studied  in static and stationary spacetimes (see \cite{bgs,ger, bg}, and the seminal paper \cite{FGM95} where  variational methods were first used  in the study of  lightlike geodesics connecting a point to a flow line of the Killing vector field in a standard stationary spacetime).

Even though the physically relevant case is the one where $\gamma$ is parametrized with respect to proper time, we can achieve more flexibility by looking at solutions having prescribed values of
the energy $\varepsilon$ and the constant $g(\dot\gamma, \dot\gamma) $ (leaving the physically important case to Remark~\ref{physicalsolo}).
In fact, thanks to Theorem~\ref{lorentz-lag}, it suffices to prove the existence of 
solutions of \eqref{sys} joining $p_0$ to $p_1$ or periodic with fixed total energy. In turn, these solutions can be obtained as 
pregeodesics of the Jacobi--Randers metric \eqref{finjac1}.
 When $(S, h_0)$ is complete and    
\begin{equation} \label{hyp2}
\inf_{S} \frac{\big(\varepsilon-\rho\beta\,\chi)^2}{\beta} > - 2e  \quad\text{and}\quad
C:=   \sup_{x \in S} \|\Omega\|_{h_e}(x) < 1,
\end{equation}
then   \eqref{finjac1} is a complete Randers metric on $S$.
 In fact,  observe that $\eqref{hyp2}$ implies $\mathcal D_{e,\rho,\varepsilon}=S$ and, moreover,
 the uniform bound $C<1$ yields forward and backward completeness of $(S,F_e)$ (cf. Remark~\ref{fincompl}).

Let us define  the function $U : S \rightarrow \R$
 as 
\begin{equation} \label{U}
U= \frac{\varepsilon-\rho\beta\,\chi}{2e\beta + \big(\varepsilon-\rho\beta\,\chi)^2}.
\end{equation}
\begin{theorem} \label{pointline}
Let $(M= \R \times S,g)$ be a stationary spacetime, with $g$ as in \eqref{1a} such that $S$ is connected  and $(S,  h_0)$ is complete. Let $\rho, \varepsilon, e \in \R$ such that \eqref{hyp2} holds.  Then
\begin{enumerate}
\def\labelenumi{(\alph{enumi})}
\item for any $(t_0, p_0) \in M$ and for any line $l :\R \to  M$, 
 a solution $\gamma$ of \eqref{eqLorentz} exists joining $ (t_0, p_0 )$ to $l$ and satisfying \eqref{const};
\item if $S$ is not contractible,  a sequence $(\gamma_n= (t_n, y_n))_n$ of curves as in (a) exists; their  arrival times $u_{\gamma_n}$ on $l$   diverge  to $+\infty$ provided  that $U$ satisfies the condition:
\begin{equation}\label{dis}
\inf_{S}  U > \sup_{S} \| \hat \omega \|_{h_e};
\end{equation}
\item if $S$ is compact,  a solution $\gamma=(t,y)$ of \eqref{eqLorentz} exists satisfying \eqref{const} and such that 
$y$ is a non-constant periodic curve on $S$.
\end{enumerate}
\end{theorem}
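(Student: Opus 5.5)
We reduce everything, via Theorem~\ref{lorentz-lag}, to the existence of geodesics of the Jacobi--Randers metric $F_e$ in \eqref{finjac1}. By the remark preceding the statement, \eqref{hyp2} gives $\mathcal D_{e,\rho,\varepsilon}=S$, and $(S,F_e)$ is a forward and backward complete Randers manifold because $(S,h_0)$ is complete, $h_e\ge c\,h_0$ with $c=\inf_S\big((\varepsilon-\rho\beta\chi)^2/\beta+2e\big)>0$, and $C<1$. By Theorem~\ref{jacprinc} (the general-$e$ version of Theorem~\ref{lorjacprin}), every $F_e$-geodesic admits a reparametrization $y$ solving \eqref{sys} with total energy \eqref{en} equal to $e$. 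Integrating \eqref{en1} with $t(0)=t_0$ then produces, by Theorem~\ref{lorentz-lag}(b), a solution $\gamma=(t,y)$ of \eqref{eqLorentz} satisfying \eqref{const}. Since the line $l$ is $u\mapsto(u,p_1)$, such a $\gamma$ automatically reaches $l$ once $y$ ends at $p_1$, with arrival time $u_\gamma=t(s_1)$.

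\emph{Part (a).} By the Hopf--Rinow theorem for forward complete Finsler manifolds, there is a minimizing $F_e$-geodesic from $p_0$ to $p_1$. If $p_0=p_1$, we either take the constant curve, which is a degenerate solution only when $\nabla V(p_0)=0$, or, more safely, use a nontrivial geodesic loop. Since part (a) only asks for existence, we will treat this case through (c)-type arguments, or simply note that a constant $y$ is not allowed and use a geodesic loop obtained by Lusternik--Schnirelmann/Serre-type arguments. For $p_0\neq p_1$, the reparametrization and lift described above finish the argument.

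\emph{Part (b).} If $S$ is not contractible, the path space $\Omega_{p_0,p_1}(S)$ has infinitely many nontrivial homology groups (Serre's theorem, via Fadell--Husseini). The energy functional of the complete Finsler metric $F_e$ satisfies Palais--Smale, as in Caponio--Javaloyes--Masiello, and yields infinitely many geodesics $y_n$ with $F_e$-lengths $L_n\to+\infty$. Lifting gives the sequence $\gamma_n$.

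For the arrival times, we need to relate $t(s_1)-t_0=\int(\varepsilon/\beta-\rho\chi+\hat\omega(\dot y))\,ds$ to $L_n$. We switch to the geodesic parametrization, writing $ds=\lambda\,d\sigma$ with the reparametrization factor from the Appendix, $\lambda=\sqrt{h_e(y',y')}/(2(e-V))$, i.e.\ $h_0$-speed$^2=2(e-V)$. Then
\[t(s_1)-t_0=\int_0^{\ell}\Big(\frac{\varepsilon-\rho\beta\chi}{\beta}\,\frac{\sqrt{h_e(y',y')}}{2(e-V)}+\hat\omega(y')\Big)d\sigma.\]
Since $2(e-V)\beta=2e\beta+(\varepsilon-\rho\beta\chi)^2$, the first coefficient is exactly $U$. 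Hence
\[t(s_1)-t_0\ \ge\ \Big(\inf_S U-\sup_S\|\hat\omega\|_{h_e}\Big)\int\sqrt{h_e(y',y')}\,d\sigma\ \ge\ \frac{\inf U-\sup\|\hat\omega\|_{h_e}}{1+C}\,L_n,\]
and the right-hand side tends to $+\infty$ by \eqref{dis}. The main obstacle is checking this reparametrization factor against the Appendix conventions. A side point is that the time component could also be negative for the lifted $\gamma$; \eqref{dis} handles exactly this.

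\emph{Part (c).} On a compact $S$, the metric $F_e$ is a Finsler metric on a compact manifold, so it has at least one nontrivial closed geodesic. If $\pi_1(S)\neq 0$, minimize in a nontrivial free homotopy class. Otherwise, apply the Lyusternik--Fet theorem, which extends to Finsler metrics on compact manifolds. Reparametrizing by Theorem~\ref{jacprinc} gives a periodic non-constant $y$ solving \eqref{sys} with energy $e$, and the lift $\gamma=(t,y)$ via \eqref{en1} solves \eqref{eqLorentz} with \eqref{const}. Periodicity is required only of $y$, so $t$ is unconstrained.
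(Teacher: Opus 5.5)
Your proposal is correct and follows essentially the paper's own proof. You use the same reduction through Theorems~\ref{lorentz-lag} and~\ref{jacprinc} to geodesics of the complete Randers metric $F_e$. Existence and, for non-contractible $S$, infinitely many geodesics with $F_e$-length $\to+\infty$ come from the energy functional and Palais--Smale, as in Caponio--Javaloyes--Masiello. The reparametrization factor $ds=\frac{\sqrt{h_e(x',x')}}{2(e-V)}\,d\sigma$ is the same, turning the time integral into $\int\big(U\sqrt{h_e(x',x')}+\hat\omega(x')\big)\,d\sigma$ and giving the lower bound $(\inf_S U-\sup_S\|\hat\omega\|_{h_e})\,L_n/(1+C)$. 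Part (c) uses a closed geodesic of a compact Finsler manifold.

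The one thing to change is your treatment of $p_0=p_1$ in (a). Without compactness or non-contractibility, a nonconstant geodesic loop need not exist. For example, take flat $S=\mathbb{R}^{n-1}$ with $\beta=1$, $\omega=0$, $\mathcal A=0$. There every admissible $x$ is a straight line with $h_0(\dot x,\dot x)=2e+\varepsilon^2>0$, so it never returns to $p_0$. This degenerate case, in which $(t_0,p_0)$ already lies on $l$, should simply be excluded. The paper does so implicitly: its minimizer of $J$ on $\Omega_{p_0,p_0}(S)$ would be the constant curve, which cannot be reparametrized as in Theorem~\ref{jacprinc}.
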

\begin{proof} 
Let us consider the functional
\[
J(x) = \frac{1}{2} \int_0^1 F_e^2(x,\dot{x})\, ds
\]
defined on the manifolds $\Omega_{p_0,p_1}(S)$ or $\Lambda(S)$ of curves $x$ in $S$ parametrized on the 
interval $[0,1]$ having $H^1$-regularity, 
i.e.  $x$ is absolutely continuous and  
$\int_0^1 h_0(\dot{x},\dot{x})\, ds<+\infty$, and satisfying respectively the boundary conditions: for statements $(a)$ and $(b)$, $x$ connects the points $p_0$ and $p_1$; for statement $(c)$, $x(0)=x(1)$. 

The properties of $J$ have been studied in \cite{cjm}. More precisely, a curve $x \in \Omega_{p_0,p_1}(S)$ or in $x\in \Lambda(S)$ is a geodesic of the Finsler manifold $(S, F_e)$ with 
$F_e(x,\dot{x})$ nonzero and constant if and only if it is a non-constant critical 
point of $J$. 

Let us prove statements $(a)$ and $(b)$. Under the completeness assumption for $h_0$ and condition \eqref{hyp2}, 
a critical point of $J$ on $\Omega_{p_0,p_1}(S)$ exists; moreover, if $S$ is not contractible, there exist 
infinitely many geodesics $x_n$ such that $J(x_n)\to +\infty$ 
(see \cite[Prop.~2.1, Thm.~3.1, Prop.~3.1]{cjm}).  Each critical point $x : [0,1] \to S$ of $J$ can be 
reparametrized as in  Theorem \ref{jacprinc}, yielding a new curve $y : [0,T] \to S$ joining $p_0$ to $p_1$, 
which is a solution of \eqref{sys} with total energy $e$. 
Finally, by \eqref{en1}, it suffices to define 
$t : [0,T] \to \mathbb{R}$ by
\begin{equation} \label{1}
t(s) = t_0 + \int_0^s 
\left( \frac{\varepsilon}{\beta (y)} 
+ \hat{\omega}(\dot{y})  - \rho \chi(y)\right)\, d\tilde s  
\end{equation}
to obtain a solution $\gamma(s) = (t(s), y(s))$ of \eqref{eqLorentz} having the required properties.
It remains to prove that the sequence of arrival times $u_{\gamma_n}$ in (b) diverges to $+\infty$.  Let $x_n : [0,1] \rightarrow S$ be a sequence of critical points of $J$ such  that $J(x_n) \rightarrow + \infty$. Setting the constants $c_n = F_e (x_n, \dot{x}_n)$, we have $2 J(x_n) =c^2_n$,  hence $c_n \rightarrow + \infty$. Moreover, by \eqref{finjac1} and \eqref{hyp2},
$$
c_n = \int_0^1 F_e (x_n, \dot{x}_n) ds \leq (1+C) \int_0^1 \sqrt{h_e (\dot{x}_n ,\dot{x}_n ) }ds.
$$
Thus
\begin{equation}\label{2a}
\int_0^1 \sqrt{h_e (\dot{x}_n ,\dot{x}_n ) } ds \rightarrow + \infty.
\end{equation}
By Theorem \ref{jacprinc},  the curve $\gamma_n= (t_n, y_n)$ is a solution  of \eqref{eqLorentz} with energy   $\varepsilon$ and   $ \tfrac{1}{2} g(\dot{\gamma_n}, \dot{\gamma_n})=e$  provided that  
 $y_n :[0, T_n] \rightarrow S$ is defined by  $y_n (s) = x_n ( \theta_n (s))$, where $\theta_n : [0, T_n]\rightarrow [0,1]$ verifies
 $$\dot\theta_n = \frac{2(e-V(x_n))}{\sqrt{h_e (\dot{x}_n ,\dot{x}_n ) }} $$
 and $t_n$ is defined as in \eqref{1}, with $y_n$ in place of $y$.
Hence, recalling \eqref{pot} and \eqref{U}, the arrival times can be written as
\begin{align} \label{at}u_{\gamma_n} = t_n (T_n)  &  = t_0 + \int_0^{T_n}  
\left( \frac{\varepsilon}{\beta (y_n)} - \rho \chi(y)\right) \, ds  
+ \int_0^{T_n} \hat{\omega}(\dot{y}_n) ds  \nonumber \\
 & = t_0 +\int_0^1   U(x_n)\sqrt{h_e (\dot{x}_n ,\dot{x}_n )}ds + \int_0^{1} \hat{\omega}(\dot{x}_n) ds .
\end{align} 
By \eqref{dis}, \eqref{2a} and \eqref{at} 
$$u_{\gamma_n} \geq t_0 + \left( \inf_{S}  U -  \sup_{S} \| \hat \omega \|_{h_e} \right) \int_0^1 \sqrt{h_e (\dot{x}_n ,\dot{x}_n ) } ds \to+\infty.$$

Finally, since in any compact Finsler manifold a periodic geodesic exists, $(c)$ holds by reasoning as in the proof of statement $(a)$.
\end{proof}
\begin{remark}\label{physicalsolo}
The physically relevant value of the constant $e$ is $-\frac 1 2$. Thus, recalling \eqref{finlerjac} and \eqref{hyp2}, Theorem \ref{pointline}-(a) holds if $h_0$ is complete and
$$ \inf_{S} \frac{\big(\varepsilon-\rho\beta\,\chi\big)^2}{\beta}>1\quad\text{and}\quad
\sup_{v \in TS \setminus \{0\}} \frac{|\Omega(v)|}{\sqrt{h_J(v, v)}} < 1 $$
while for statement $(c)$  it is enough that \eqref{hyp} holds.
\end{remark}
 In the next proposition we give a conditions ensuring that the Randers domain $D_{\rho,\varepsilon}$ at the end of Definition~\ref{Randersdomain} coincides with $S$ for large energy values $\varepsilon$.
Let us denote with  $a:S\to \R$ the function 
\[a(x)=\|\hat\omega\|_{h_0}^2(x)-\frac1{\beta(x)};\] 
we notice that from \eqref{opticaldata}
\[
\frac{\hat\omega(v)^2}{h_0(v,v)}<\frac1\beta,
\]
which implies that $a(x)<0$, for all $x\in S$.
Let us set
\[
a_-:=\sup_S a,\qquad \phi_+:=\sup_S|\phi|, \qquad \ A_+:=\sup_S\|\mathcal A_S\|_{h_0}
\]
 
\begin{proposition}\label{global-high-energy}
	 Let us assume that $\phi_+, A_+$ are finite and  $a_-<0$.	
	Then there exists $\varepsilon_0=\varepsilon_0(\rho,\beta,\hat\omega,\phi, A)\geq 0$ such that,
	for every $\varepsilon\in\R$ with $|\varepsilon|>\varepsilon_0$, one has
	\[
	\Phi_{\rho,\varepsilon}(x)<0,\qquad\forall x\in S,
	\]
	where $\Phi_{\rho,\varepsilon}$ is defined in \eqref{Phi}. Hence the Randers domain is global, i.e.\ $\mathcal D_{\rho,\varepsilon}=S$, and \eqref{finlerjac}
	defines a Randers metric on $TS$.
\end{proposition}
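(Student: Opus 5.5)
The plan is to expand $\Phi_{\rho,\varepsilon}$ explicitly as a quadratic polynomial in $\varepsilon$ whose coefficients are functions on $S$, and then bound each coefficient uniformly using the hypotheses. By \eqref{Phi}, \eqref{pot} and \eqref{Omega},
\[
\Phi_{\rho,\varepsilon}=1-\frac{(\varepsilon-\rho\beta\chi)^2}{\beta}+\|\varepsilon\hat\omega+\rho\mathcal A_S\|_{h_0}^2 .
\]
I will expand both squares. The first gives $-\varepsilon^2/\beta+2\varepsilon\rho\chi-\rho^2\beta\chi^2$. The second gives $\varepsilon^2\|\hat\omega\|_{h_0}^2+2\varepsilon\rho\,h_0^*(\hat\omega,\mathcal A_S)+\rho^2\|\mathcal A_S\|_{h_0}^2$, where $h_0^*$ is the dual metric.

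The key observation is a cancellation in the linear term. Since $\mathcal A_S=A^{\flat,h_0}$, we have $h_0^*(\hat\omega,\mathcal A_S)=\hat\omega(A)$. By \eqref{chi}, $\chi=\phi-\hat\omega(A)$, so the coefficient of $\varepsilon$ is
\[
2\rho\chi+2\rho\,\hat\omega(A)=2\rho\phi .
\]
This gives the identity
\[
\Phi_{\rho,\varepsilon}(x)=a(x)\,\varepsilon^2+2\rho\,\phi(x)\,\varepsilon+1-\rho^2\beta(x)\chi(x)^2+\rho^2\|\mathcal A_S\|_{h_0}^2(x).
\]
The cancellation matters. Without it the linear coefficient would involve $\|\hat\omega\|_{h_0}$, which the hypotheses do not control, because $a_-<0$ does not bound $1/\beta$.

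Next I bound the coefficients uniformly on $S$. We have $a\le a_-<0$ and $|2\rho\phi\,\varepsilon|\le 2|\rho|\phi_+|\varepsilon|$. For the constant term, I drop $-\rho^2\beta\chi^2\le 0$ and use $\|\mathcal A_S\|_{h_0}\le A_+$. Hence, for all $x\in S$,
\[
\Phi_{\rho,\varepsilon}(x)\le a_-\varepsilon^2+2|\rho|\phi_+|\varepsilon|+1+\rho^2A_+^2=:q(|\varepsilon|).
\]
Since $a_-<0$, the polynomial $q$ has negative leading coefficient and $q(0)>0$, so it has a unique positive root. I set
\[
\varepsilon_0:=\frac{|\rho|\phi_++\sqrt{\rho^2\phi_+^2-a_-(1+\rho^2A_+^2)}}{-a_-},
\]
which depends only on $\rho,\beta,\hat\omega,\phi,A$ through $a_-,\phi_+,A_+$. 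For $|\varepsilon|>\varepsilon_0$ we get $q(|\varepsilon|)<0$, hence $\Phi_{\rho,\varepsilon}<0$ on all of $S$. By \eqref{Phi} and the preceding remark this means $\mathcal D_{\rho,\varepsilon}=S$, so \eqref{finlerjac} is a Randers metric on $TS$.

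The only delicate step is the identification $h_0^*(\hat\omega,\mathcal A_S)=\hat\omega(A)$, which produces the cancellation. Everything else is an elementary estimate.
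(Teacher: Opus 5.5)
Your argument is correct and matches the paper's proof: you expand $\Phi_{\rho,\varepsilon}$ as $a\varepsilon^2+2\rho\phi\varepsilon+1+\rho^2(\|\mathcal A_S\|_{h_0}^2-\beta\chi^2)$ using the cancellation $\chi+\hat\omega(A)=\phi$, drop $-\rho^2\beta\chi^2$, and take $\varepsilon_0$ to be the positive root of $a_-r^2+2|\rho|\phi_+r+1+\rho^2A_+^2$. Your root formula, with $-a_-(1+\rho^2A_+^2)$ under the square root, is the correct one: the paper's displayed $\varepsilon_0$ has the opposite sign there, and its alternative case ``satisfied for any $\varepsilon$'' cannot occur, since, as you note, $q(0)>0$.
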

\begin{proof}
Taking into account \eqref{pot} and \eqref{Omega}, we have  for every $x\in S$
\[		\Phi_{\rho,\varepsilon}
	=1-\frac{(\varepsilon-\rho\beta\chi)^2}{\beta}+\|\varepsilon\hat\omega+\rho\mathcal A_S\|_{h_0}^2.\]
Since the operator norm $\|\cdot\|_{h_0}$ coincides with the norm of covector field w.r.t. the dual metric $h_0^{-1}$ we have 
\[
\|\Omega\|_{h_0}^2
=h_0^{-1}(\Omega,\Omega)
	=\varepsilon^2\|\hat\omega\|_{h_0}^2
	+2\rho\varepsilon\,h_0^{-1}(\hat\omega,\mathcal A_S)
	+\rho^2\|\mathcal A_S\|_{h_0}^2.
	\]
	Since $\mathcal A_S=h_0(A,\cdot)$, then $h_0^{-1}(\hat\omega,\mathcal A_S)=\hat\omega(A)$.	
	Hence
	\[\Phi_{\rho,\varepsilon}=1+\Big(\|\hat\omega\|_{h_0}^2-\frac1{\beta}\Big)\varepsilon^2
		+2\rho\big(\chi+\hat\omega(A)\big)\varepsilon
		+\rho^2\big(\|\mathcal A_S\|_{h_0}^2-\beta\chi^2\big).
	\]
	Recalling \eqref{chi},  $\chi+\hat\omega(A)=\phi$ and we obtain
	\begin{equation*}
		\Phi_{\rho,\varepsilon}
		=1+a\,\varepsilon^2+2\rho\,\phi\,\varepsilon+\rho^2\big(\|\mathcal A_S\|_{h_0}^2-\beta\chi^2\big),
	\end{equation*}
	Dropping the nonpositive term $\rho^2\beta\chi^2$ gives,
	\[
	\Phi_{\rho,\varepsilon}\le a_-\varepsilon^2+2|\rho|\phi_+|\varepsilon|+\big(1+\rho^2A_+^2\big).
	\]
	Therefore $\Phi_{\rho,\varepsilon}<0$ on $S$ as soon as
	\[
	a_-\varepsilon^2+2|\rho|\phi_+|\varepsilon|+\big(1+\rho^2A_+^2\big)<0.
	\]
	Thus, the above inequality is either satisfied for any $\varepsilon$ (and then in this case we take $\varepsilon_0=0$) or, taken 
	\[
	\varepsilon_0:=
	-\frac{|\rho|\,\phi_+ + \sqrt{|\rho|^2 \phi_+^2 + a_-\big(1+\rho^2 A_+^2\big)}}{a_-},
	\]
	it is satisfied for all $|\varepsilon|>\varepsilon_0$. In any case, $\Phi_{\rho,\varepsilon}<0$ on $S$, and  $\mathcal D_{\rho,\varepsilon}=S$.
\end{proof}
As a consequence of Theorem~\ref{pointline}-(c) and Proposition~\ref{global-high-energy} we immediately get the following:
\begin{corollary}\label{closedlarge}
	Assume $S$ is compact.  Then there exists  $\varepsilon_0$ as in Proposition~\ref{global-high-energy} such 
	 that  for every $\varepsilon\in\R$, with $|\varepsilon|>\varepsilon_0$, there exists a timelike solution
	$\gamma_\varepsilon=(t_\varepsilon,x_\varepsilon)$ of \eqref{eqLorentz} parametrized by proper time
	($g(\dot\gamma_\varepsilon,\dot\gamma_\varepsilon)=-1$), with energy level $\varepsilon$, such that
	$x_\varepsilon$ is a non-constant periodic curve in $S$.
\end{corollary}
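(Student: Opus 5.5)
The plan is to combine Proposition~\ref{global-high-energy} with Theorem~\ref{pointline}-(c) in the physically relevant case $e=-\tfrac12$.

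First, we check the hypotheses of Proposition~\ref{global-high-energy} using compactness of $S$. The functions $\phi$ and $\|\mathcal A_S\|_{h_0}$ are continuous on the compact $S$, so $\phi_+$ and $A_+$ are finite. The function $a=\|\hat\omega\|^2_{h_0}-1/\beta$ is continuous and strictly negative pointwise, so its supremum is attained and $a_-<0$. Proposition~\ref{global-high-energy} then gives $\varepsilon_0\ge 0$ such that, for $|\varepsilon|>\varepsilon_0$, $\Phi_{\rho,\varepsilon}<0$ on $S$. By the remark preceding \eqref{Phi}, this means $\mathcal D_{\rho,\varepsilon}=S$, which is exactly \eqref{hyp}.

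Second, for such $\varepsilon$ we apply Theorem~\ref{pointline}-(c) with $e=-\tfrac12$. By Remark~\ref{physicalsolo}, for statement (c) condition \eqref{hyp} suffices in place of \eqref{hyp2}. Indeed, on compact $S$ the pointwise strict inequalities in \eqref{hyp} upgrade to the uniform ones in \eqref{hyp2}, by continuity and attainment of extrema, and $(S,h_0)$ is automatically complete. So $F$ in \eqref{finlerjac} is a Randers metric on the compact manifold $S$. Such a metric admits a non-constant closed geodesic, obtained for instance by minimizing $J$ in a nontrivial free homotopy class of $\Lambda(S)$ when $\pi_1(S)\neq 0$, or by the general existence result for compact Finsler manifolds otherwise.

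By Theorem~\ref{lorjacprin} (equivalently Theorem~\ref{jacprinc}), this geodesic reparametrizes to a periodic, non-constant solution $x_\varepsilon$ of \eqref{sys} with total energy $-\tfrac12$. Then Theorem~\ref{lorentz-lag}-(b), with $t_\varepsilon$ defined by integrating \eqref{en1}, yields a solution $\gamma_\varepsilon$ of \eqref{eqLorentz} with energy $\varepsilon$ and $g(\dot\gamma_\varepsilon,\dot\gamma_\varepsilon)=-1$. Hence $\gamma_\varepsilon$ is timelike and parametrized by proper time.

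There is only one potential subtlety: the reparametrization must preserve periodicity. It does, because the reparametrization factor $\dot\theta$ depends only on the point and velocity along the closed geodesic, which is itself periodic.
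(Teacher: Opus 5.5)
Your proposal is correct and follows the paper's own route. The paper also obtains the corollary by combining Proposition~\ref{global-high-energy}, which gives $\mathcal D_{\rho,\varepsilon}=S$ for $|\varepsilon|>\varepsilon_0$, with Theorem~\ref{pointline}-(c) at $e=-\tfrac12$ via Remark~\ref{physicalsolo}. The details the paper leaves implicit when it says ``immediately'' are exactly the ones you supply: finiteness of $\phi_+,A_+$ and $a_-<0$ from compactness, the upgrade of \eqref{hyp} to \eqref{hyp2}, and periodicity of the Jacobi reparametrization of a closed geodesic.
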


\begin{remark}[Multiplicity by varying $\varepsilon$]
	Choosing any sequence $|\varepsilon_k|\to+\infty$ yields infinitely many proper-time parametrized solutions
	with  periodic projection, distinguished by their energy levels.
\end{remark}

\section*{Acknowledgments}
	\noindent E.C. and A.M. are  partially supported   by  MUR under the Programme ``Department of Excellence'' Legge 232/2016  (Grant No. CUP - D93C23000100001).\\
All the authors  are partially supported  by  INdAM - GNAMPA and in particular E.C. and A.M. by  ``INdAM - GNAMPA Project''  CUP E53C25002010001.

\appendix
\section{Jacobi-Randers metrics for non-relativistic electromagnetic systems}\label{app}
  
Let $(M,h)$ be a connected Riemannian manifold  with Levi-Civita connection $\nabla$ and let $V$ be a smooth function on $M$.  The motion of a particle on $M$ under the   potential $V$ is governed by the equation
$$\nabla_{\dot{x}}\dot{x} + \nabla V(x)  =0$$
where $\nabla V$ denotes the gradient of $V$ with respect to $h$. 
The classical Maupertuis-Jacobi principle states that  the solutions of the above equation with constant energy
\begin{equation} \label{toten}
e= \frac{1}{2}h (\dot{x}, \dot{x}) + V(x)
\end{equation}
are, up to   reparametrization, geodesics for the  Jacobi metric in \eqref{jac}.
\begin{remark}  \label{compljac}  
The Jacobi metric in \eqref{jac}  is Riemannian only on an open subset of $M$ in general.   It is a  Riemannian metric on the whole manifold $M$ if and only if $e > V$ on $M$. Moreover, if $V$ is bounded from above and $e > \sup V$, the Jacobi metric $h_e$ is complete whenever $(M,h)$ is complete.
\end{remark}

When a particle moves also under the action of a magnetic field, described by an exact two-form   $F=d\Omega $  on $M$, the   equation of the motion becomes 
\begin{equation} \label{mag}
\nabla_{\dot{x}}\dot{x} + \nabla V(x) =-\big(\iota_{\dot x} F\big)^\sharp   
\end{equation}
where   $\iota_{\dot x} F:=F(\dot x,\cdot)$ and ${}^\sharp$ is the musical isomorphism ${}^\sharp:T^*M\to TM$ associated with $h$. Since $F$ is antisymmetric,  the total energy of a solution $x$ remains equal to $e$ as defined  in \eqref{toten}. When dealing with equation \eqref{mag}, solutions with fixed total energy $e$ are  solutions of the Euler-Lagrange equation of the action functional of the Lagrangian $F_e$ defined in \eqref{finjac1}
(see  also  \cite{mest2016}).
It is well known that $F_e$ defines a Finsler metric on $M$ of Randers type, 
provided that $(M,h_e)$ is a Riemannian manifold and that $F_e$ is positive. 
Thus, taking into account Remark~\ref{compljac}, $F_e$ is a Randers metric on $M$ 
if and only if
\begin{equation}\label{fincond}
e > V(x) 
\quad\text{and}\quad
\sup_{v \in T_xM \setminus \{0\}} 
\frac{|\Omega_x(v)|}{\sqrt{\,h_e|_x(v,v)\,}} < 1, 
\quad 
\forall x \in M.
\end{equation}
In \cite[Sect. 11.1]{bcs}, it is shown that the second condition in \eqref{fincond} 
not only ensures that $F_e$ is positive, but also that its square is  fiberwise strongly convex.
\begin{remark} \label{fincompl}
When the stronger conditions 
\begin{equation} \label{fincond1}
e>\sup_M V \quad\text{and}\quad
	\sup_{v\in TM\setminus\{0\}}\frac{|\Omega(v)|}{\sqrt{\,h_e(v,v)\,}}<1 
\end{equation}
hold, the Finsler manifold $(M, F_e)$  is forward and backward complete. Indeed,   the first condition in \eqref{fincond1} guarantees that $(M,h_e) $ is complete (see Remark \ref{compljac}). Hence,  under the second condition, the same holds for $F_e$ (see \cite[Remark 4.1]{cjm} for details).
\end{remark}
When  \eqref{fincond} holds, a direct computation of  the Euler–Lagrange equations for the action functional of the Lagrangian $F_e$  shows that the pregeodesics $x$ of $F_e$  satisfy  
\begin{equation} \label{eul}
\nabla_{\dot{x}} \left(\sqrt{ \frac{2 (e-V(x))}{h(\dot x , \dot x)}}\dot{x} \right) +  \sqrt{\frac{ h(\dot x , \dot x)}{ 2(e-V(x))}} \nabla V (x)  = -\big(\iota_{\dot x} F\big)^\sharp.
\end{equation} 

In the next theorem, we shall provide a proof of the relation between the geodesics of $F_e$ and the fixed-energy solutions of \eqref{mag}, 
which extends to the magnetic case the proof of the Maupertuis--Jacobi principle given in \cite{ben84}.

\begin{theorem}\label{jacprinc}
Assume \eqref{fincond}. A curve $x$ solves \eqref{mag} with constant energy \eqref{toten} equal to $e$
if and only if  it is a pregeodesic of the Randers metric $F_e$ in \eqref{finjac1}.
\end{theorem}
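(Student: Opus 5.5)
The approach is to work directly with the Euler--Lagrange equation \eqref{eul} for the Lagrangian $F_e$, and compare it with \eqref{mag} through an explicit reparametrization that normalizes the speed. Throughout I write $\lambda(x,\dot x):=\sqrt{2(e-V(x))/h(\dot x,\dot x)}$. On the set where \eqref{fincond} holds, $e-V>0$, so $\lambda$ is well defined and positive for $\dot x\neq 0$.

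First, suppose $x$ solves \eqref{mag} with energy $e$. Then \eqref{toten} gives $h(\dot x,\dot x)=2(e-V(x))$ identically, so $\lambda\equiv 1$. Substituting $\lambda=1$ into \eqref{eul} returns exactly \eqref{mag}. Hence $x$ satisfies the Euler--Lagrange equation of $F_e$.

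Before concluding that $x$ is a pregeodesic, I should check what that equation means. The action $\int F_e(x,\dot x)\,ds$ is reparametrization invariant, since $F_e$ is positively homogeneous of degree one. Its critical points, with $\dot x\neq 0$, are therefore exactly the pregeodesics of $F_e$. This holds because $F_e$ is a genuine Finsler metric under \eqref{fincond}, and the Euler--Lagrange equations of the length functional are equivalent to the geodesic equation after passing to constant $F_e$-speed. Here \eqref{eul} is precisely the Euler--Lagrange equation of $F_e$. The magnetic term comes from the linear part $\Omega(\dot x)$: its variation yields $d\Omega(\dot x,\cdot)$, which is the same $1$-form as $\iota_{\dot x}F$ up to the sign fixed by the paper's convention. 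I will verify this sign carefully by integrating by parts. The remaining terms come from varying $\sqrt{2(e-V)h(\dot x,\dot x)}$. So solutions of \eqref{mag} at energy $e$ are pregeodesics. I also need to treat the possibility of points where $\dot x=0$. Since $e>V$ and energy is conserved, $h(\dot x,\dot x)=2(e-V)>0$ everywhere, so no such points occur.

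Conversely, let $x$ be a pregeodesic of $F_e$; reparametrize it so that $\dot x\neq 0$. Then $x$ satisfies \eqref{eul} in any parametrization. Choose a new parameter $\sigma=\sigma(s)$ with
\[
\frac{d\sigma}{ds}=\frac{1}{\lambda(x(s),\dot x(s))}.
\]
This ODE is solvable and strictly increasing because $\lambda>0$. Write $y(\sigma)=x(s(\sigma))$; by construction $h(y',y')=2(e-V(y))$, so $y$ has energy $e$.

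The main step is then to rewrite \eqref{eul} in the parameter $\sigma$. Using $\frac{d}{ds}=\frac{1}{\lambda}\frac{d}{d\sigma}$, the first term becomes $\lambda^{-1}\nabla_{y'}y'$, because $\lambda\dot x=y'$. The potential term is $\lambda^{-1}\nabla V$. The magnetic term is linear in the velocity, so it becomes $-\lambda^{-1}(\iota_{y'}F)^\sharp$. Multiplying through by $\lambda$ gives \eqref{mag} for $y$.

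The only delicate point is bookkeeping of the homogeneity: each term of \eqref{eul} must scale by the same factor $\lambda^{-1}$. This holds because every term is $1$-homogeneous in the velocity when written in the $\lambda$-normalized form.
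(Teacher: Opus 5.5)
Your proof is correct and follows the paper's route: for the direction from pregeodesics to solutions you use exactly the paper's reparametrization $d\sigma/ds=1/\lambda$ (the paper's $\alpha'=f$), and the term-by-term rescaling of \eqref{eul} is the same computation. For the other direction you note that $\lambda\equiv 1$ along an energy-$e$ solution, so \eqref{eul} reduces literally to \eqref{mag} with no reparametrization. The paper instead first passes to the parametrization with $h_e(\dot x,\dot x)=1$ via $j=2(e-V)$; your shortcut is valid because \eqref{eul} is invariant under orientation-preserving reparametrizations.
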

 
\begin{proof}
Let $x:[a,b] \rightarrow M$ be a solution of  \eqref{eul}, which can be rewritten as
\begin{equation}\label{eul1}
\nabla_{\dot{x}} \left( \frac{\dot{x}}{f} \right) +  f \nabla V (x)  = -\big(\iota_{\dot x} F\big)^\sharp 
\end{equation}
where 
$$f(s):=\sqrt{\frac{ h(\dot x(s) , \dot x (s))}{ 2(e-V(x(s)))}}.$$

Define the function $\alpha : [a,b] \rightarrow [0,T]$ by $\alpha(s) = \int_a^s f(\tau) d \tau $,  
where $T=\int_a^b f(\tau) d\tau$. Let $y :[0,T] \rightarrow M$ be defined by $y (s) = x(\alpha^{-1}(s))$, so that $\dot y (s) =\frac{\dot x (\alpha^{-1}(s))}{f(\alpha^{-1}(s))}$.  
Hence we obtain
\begin{align*}
\frac{1}{2}h (\dot{y}, \dot{y}) + V(y)  & = \frac{1}{2} \frac{1}{f(\alpha^{-1})^2}  h (\dot{x}(\alpha^{-1} ) , \dot{x}(\alpha^{-1}  ))  + V(x (\alpha^{-1}))\\
&  =  (e-V(x (\alpha^{-1}  ))) +  V(x (\alpha^{-1}))= e . 
\end{align*}
Moreover, by  \eqref{eul1} 
\begin{align*}
\nabla_{\dot{y}}\dot{y} & =  \frac{1}{f(\alpha^{-1})} \cdot \nabla_{\dot{x}(\alpha^{-1})} \left( \frac{1}{f(\alpha^{-1})} \dot{x}(\alpha^{-1}) \right)\\
  & =  \frac{1}{f(\alpha^{-1})}  \left( - f(\alpha^{-1}) \nabla V (x (\alpha^{-1}))  - \big(\iota_{\dot x (\alpha^{-1} )} F\big)^\sharp \right)\\
&  = -  \nabla V (x(\alpha^{-1} )) - \frac{1}{f(\alpha^{-1})} \big(\iota_{\dot x (\alpha^{-1} )} F\big)^\sharp   \\
&= -  \nabla V ( y )- \big(\iota_{\dot y} F\big)^\sharp   
\end{align*}
thus $y$ solves \eqref{mag}.

Conversely, let   $y:[a,b] \rightarrow M$ be a solution of 
\begin{align}
& \nabla_{\dot{y}}\dot{y} + \nabla V(y)   =-\big(\iota_{\dot y} F\big)^\sharp \label{fixe1} \\
& \frac{1}{2}h (\dot{y}, \dot{y}) + V(y) =e . \label{fixe2}
\end{align} 
Define $j(s): = 2\big(e-V(y(s))\big)$, $s \in [a,b]$,  let $\alpha : [a,b] \rightarrow [0,T]$ be given by  $\alpha(s) = \int_a^s j(\tau ) d \tau $  where $T=\int_a^b  j(\tau) d \tau$. Let us consider $x : [0,T]\rightarrow M$ defined by  $x(s) = y(\alpha^{-1} (s))$,  so that $\dot{x}(s) = \frac{\dot{y}(\alpha^{-1}(s))}{j(\alpha^{-1} (s))}$.
By \eqref{fixe2} we obtain
$$
h_e (\dot{x}, \dot{x})=  2(e-V(x))h (\dot{x}, \dot{x})= \frac{h( \dot{y}(\alpha^{-1} ), \dot{y}(\alpha^{-1} ))}{j (\alpha^{-1} )} =  \frac{2(e-V(y (\alpha^{-1})))}{j (\alpha^{-1} )}=1 $$
Thus, by \eqref{jac}, 
\begin{equation} \label{a}
 \sqrt{ \frac{2 (e-V(x))}{h(\dot x , \dot x)}} =\frac{2 (e-V(x))}{ \sqrt{ h_e(\dot x , \dot x)}}=  2 (e-V(x))
\end{equation}
and by  \eqref{fixe1} and \eqref{a}, 
\begin{align*}
\nabla_{\dot{x}} \left(\sqrt{ \frac{2 (e-V(x))}{h(\dot x , \dot x)}}\dot{x} \right) &  = \nabla_{\dot{x}} \left(  2 (e-V(x)) \dot{x} \right) = \nabla_{\frac{\dot{y}(\alpha^{-1} )}{j(\alpha^{-1})}} \dot{y}(\alpha^{-1}) \\
 & = \frac{1}{j(\alpha^{-1} ) }\left(- \nabla V (y (\alpha^{-1} )) -\big(\iota_{\dot y(\alpha^{-1})} F\big)^\sharp \right) \\
&= - \frac{\nabla V (x)}{2 (e-V(x))}- \big(\iota_{\dot x} F\big)^\sharp \\
& = - \sqrt{\frac{ h(\dot x , \dot x)}{ 2(e-V(x))}} \nabla V (x) - \big(\iota_{\dot x} F\big)^\sharp 
\end{align*}
which shows that $x$ solves \eqref{eul} and therefore  it is a geodesic of  $F_e$.
\end{proof}

\end{document}